\documentclass[a4paper,fleqn]{cas-dc}

\usepackage{flushend}
\graphicspath{{figures/}}
\usepackage[numbers,sort&compress]{natbib}

\newtheorem{ass}{Assumption}
\newtheorem{defn}{Definition}
\newtheorem{thm}{Theorem}
\newtheorem{prop}{Proposition}
\newtheorem{lem}{Lemma}

\newtheorem{rem}{Remark}
\newtheorem{claim}{Claim}
\newtheorem{cexample}{Counterexample}
\newproof{pf}{Proof}
\DeclareMathOperator{\artanh}{artanh}

\begin{document}
\let\WriteBookmarks\relax
\def\floatpagepagefraction{1}
\def\textpagefraction{.001}

% Short title and authors
\shorttitle{Turnpike properties in nonlinear system identification}    
\shortauthors{J. D. Schiller \& M. A. Müller}  

% Main title and footnote
\title [mode = title]{Turnpike properties in nonlinear system identification}  
\tnotemark[1] 
\tnotetext[1]{
	This work was supported by the Deutsche Forschungsgemeinschaft (DFG, German Research Foundation), Project 535860958.
} 

% First author
\author[]{Julian D. Schiller}[orcid=0000-0002-7091-0849]
\cormark[1]
\cortext[1]{Corresponding author}
\ead{schiller@irt.uni-hannover.de}
\credit{Conceptualization, Formal analysis, Methodology, Project administration, Software, Validation, Visualization, Writing -- original draft}

% Second author
\author[]{Matthias A. Müller}[orcid=0000-0002-4911-9526]
\ead{mueller@irt.uni-hannover.de}
\credit{Conceptualization, Funding acquisition, Project administration, Resources, Validation, Writing -- Review \& Editing}

% Address/affiliation
\affiliation[]{
	organization={Leibniz University Hannover, Institute of Automatic Control},
	city={Hannover}, country={Germany}
}

\begin{abstract}
	We analyze the problem of learning general discrete-time nonlinear state-space models using the simulation error minimization (SEM) method. In this setting, model parameters are typically learned by minimizing the mismatch between simulated and measured outputs over a training dataset, or shorter subsequences extracted from it. Specifically, we study the cumulative output turnpike property of the underlying SEM optimization problem, which requires optimal output sequences emanating from a fixed initial state to approach and remain close to an optimal output sequence of the corresponding SEM problem with free initial state. In the presence of non-unique optimal output sequences---as may arise, for instance, in fully black-box system identification using neural networks---the property is formulated with respect to the closest such sequence. Turnpike behavior is generally desirable in practice, as it provides a theoretical justification for employing computationally more tractable SEM formulations with fixed initial states while ensuring that their optimal output sequences remain close to unconstrained optimal ones. Under a mild reachability condition, we establish equivalence between the cumulative turnpike property, coercivity of the value function, and a tailored notion of strict dissipativity. We additionally introduce a cardinality turnpike property and show that it is strictly weaker than the cumulative notion. Finally, we establish sufficient conditions for turnpike behavior based on incremental output stability, convexity of the stage cost, and a suitable optimality condition, and illustrate the theory by means of a numerical example.
\end{abstract}

% Keywords
\begin{keywords}
	System identification \sep Simulation error minimization \sep Turnpike theory \sep Nonlinear systems \sep Dissipativity \sep Deep learning
\end{keywords}

\maketitle

%%%%%%%%%%%%%%%%%%%%%%%%%%%%%%%%%%%%%%%%%%%%%%%%%%%%%%%%%%%%%%%%%%%%%%%%%%%%%%%%%%%%

\section{Introduction}

Learning state-space models from input-output data is a fundamental problem in system identification. A classical approach is prediction error minimization, where one-step-ahead predictors are identified by minimizing the corresponding prediction error \cite{Ljung1999}.
In many applications, however, accurate long-term predictions over a finite horizon are required, e.g., for model analysis, simulation, model-based control design, and model predictive control. In such settings, simulation error minimization (SEM) is often more appropriate, as it directly minimizes multi-step prediction errors \cite{Farina2010c,Aguirre2010}.

While conceptually appealing, solving the full SEM problem is computationally challenging, particularly in the presence of highly nonlinear models and large datasets. Specifically, when employing standard iterative gradient-based numerical optimization methods, each iteration requires performing the full forward simulation of the model together with the associated sensitivity (backward) computations over the entire dataset, and may hence lead to numerical difficulties and slow training \cite{Ribeiro2020,AllenZhu2019,Ayed2019}. A common practical alternative is therefore to perform SEM over shorter subsequences extracted from the dataset, enabling efficient and parallelizable training on modern hardware.

A central issue in SEM is the treatment of the initial state. While it can be included as a decision variable and learned jointly with the model parameters, this significantly increases computational complexity and hinders parallelization. Alternative approaches include online estimation techniques such as extended Kalman filtering \cite{Bemporad2023}, or regularized formulations that retain parallelizability while ensuring consistency \cite{Forgione2021}, albeit at increased computational cost.
A widely used practical approach is to simply fix the initial state---e.g., randomly or to zero---and optimize only over the model parameters, as is common in training recurrent neural networks; see, e.g., \cite{Bonassi2022,Jaeger2002,Schiller2025b,Mohajerin2019,Forgione2022}.

More sophisticated approaches augment this setup with an encoder that maps past input-output data to an internal state used to initialize the model; see, e.g., \cite{Masti2021,Beintema2023,Beintema2023a,Bemporad2025}. Since the encoder is usually trained jointly with the model, this increases the computational burden and may introduce additional hyperparameters in the composite loss function that require careful tuning; see, e.g., \cite{Masti2021,Mohajerin2019}.

These computationally motivated approaches naturally raise the question of how fixing or estimating the initial state affects the resulting SEM solution. In this context, turnpike theory provides a useful framework for analysis. Originating in optimal control, the turnpike property describes the phenomenon that optimal trajectories spend most of their time close to a particular steady state or reference trajectory, referred to as the \emph{turnpike}  \cite{Faulwasser2022,Gruene2016a,Gruene2018,Zaslavski2006,Trelat2018a}.
This concept can be adapted to system identification by formulating the turnpike property in terms of outputs rather than states and controls, thereby accounting for the inherent non-uniqueness of model parameterizations~\cite{Schiller2025b}.
The resulting framework enables the analysis of discrepancies between output sequences generated by SEM solutions with an arbitrarily fixed initial state and those associated with the desired unconstrained SEM problem with free initial state.

\paragraph{Contributions.}
In this paper, we generalize the concept of cumulative output turnpike to the case of non-unique optimal output sequences.
This setting naturally captures training of modern deep learning models using SEM methods, where distinct optimal model parameterizations may generate different output sequences while attaining minimal cost.
The presence of turnpike behavior is generally desirable in practice, as it allows one to employ tractable (potentially truncated) SEM methods with a fixed initial state while guaranteeing that the resulting output sequence remains close to an unconstrained optimal one.
Under a mild reachability condition, we show in Section~\ref{sec:main_result} that the cumulative turnpike property is equivalent to a coercivity property of the value function and a tailored notion of strict dissipativity.
Moreover, in Section~\ref{sec:cardTP}, we introduce a corresponding cardinality turnpike property and show that cumulative turnpike implies cardinality turnpike, whereas the converse does not hold in general---equivalence can, however, be recovered under suitable boundedness assumptions. In Section~\ref{sec:sysid_practice}, we establish sufficient conditions for turnpike behavior in system identification based on incremental output stability, convexity of the stage cost, and a suitable optimality condition. Finally, in Section~\ref{sec:example}, we illustrate the theory using a simple Elman-type recurrent neural network for which these conditions can be verified analytically.

\paragraph{Notation.}
The set of non-negative natural numbers (including zero) is denoted by $\mathbb{N}$, and the set of positive natural numbers by $\mathbb{N}_{+}:=\mathbb{N}\setminus\{0\}$.
The interval notation $[a,b]$ with $a,b\in\mathbb{N}$ refers to integers. 
The Euclidean norm of a vector $x\in\mathbb{R}^n$ is denoted by $\|x\|$.
For a finite set \(\mathcal A\), its cardinality is denoted by \(\#\mathcal A\).
A continuous function $\alpha : \mathbb{R}_{\geq0} \to \mathbb{R}_{\geq0}$ is said to belong to class $\mathcal{K}$ if it is strictly increasing and satisfies $\alpha(0)=0$. It is of class $\mathcal{K}_\infty$ if, in addition, $\alpha(s) \to \infty$ as $s \to \infty$.
A continuous function $\sigma : \mathbb{R}_{\geq0} \to \mathbb{R}_{\geq0}$ is said to belong to class $\mathcal{L}$ if it is non-increasing and satisfies $\lim_{s \to \infty} \sigma(s) = 0$.

\section{Problem setting}

In this section, we introduce the considered SEM problem, state the standing assumptions, and establish the basic properties required for the subsequent analysis.

\subsection{Preliminaries}
In this work, we consider the problem of learning the parameters of a general nonlinear discrete-time state-space model in the form of
\begin{align}
	x^+ &= f(x,u;\theta),\label{eq:sys_1}\\
	y &= h(x,u;\theta),\label{eq:sys_2}
\end{align}
where $x\in\mathbb{R}^n$ is the state of the system, $x^+\in\mathbb{R}^n$ is the successor state, $u\in\mathbb{R}^m$ is the input, and $y\in\mathbb{R}^p$ is the output. The dynamics and output equation of the model are characterized by nonlinear continuous functions $f$ and $h$, which are parameterized by a parameter vector $\theta\in\mathbb{R}^o$.
In the following, we will frequently use $y_j(\bar{x},\theta,U)$ to denote the output generated by the model \eqref{eq:sys_1}--\eqref{eq:sys_2} at time $j\in\mathbb{N}$ under a particular parameterization $\theta$, when being initialized with $x_0 = \bar{x}$ and driven by the input sequence $U=\{u_j\}_{j=0}^\infty$.

To determine the model parameters $\theta$, we assume that there is an input-output training data sequence $D$ of length $N+1$ with $N\in\mathbb{N}$:
\begin{equation}\label{eq:Di}
	D := \{({u}_{j}^\mathrm{d},{y}_{j}^\mathrm{d})\}_{j=0}^N, \quad U^\mathrm{d} := \{u^\mathrm{d}_{j}\}_{j=0}^N.
\end{equation}
In line with standard learning techniques, we suppose that the training data is normalized in the sense that, for any $N\in\mathbb{N}$, $({u}_{j}^\mathrm{d},{y}_{j}^\mathrm{d})\in\mathcal{U}^\mathrm{d}\times\mathcal{Y}^\mathrm{d}$ for all $j\in[0,N]$, where $\mathcal{U}^\mathrm{d}\subset\mathbb{R}^{m}$ and $\mathcal{Y}^\mathrm{d}\subset\mathbb{R}^{p}$ are compact.

\subsection{System identification}
To learn the parameters of the model~\eqref{eq:sys_1}--\eqref{eq:sys_2} using SEM, we minimize the following cost function:
\begin{equation}\label{eq:NLP_cost}
	J(x_0,\theta) := \sum_{j=0}^N \beta_j l({y}_j(x_0,\theta,U^\mathrm{d}),{y}_{j}^\mathrm{d}) + r(\theta,\bar{\theta}).
\end{equation}
Here, $l:\mathbb{R}^p\times\mathbb{R}^p\rightarrow\mathbb{R}_{\geq0}$ is a continuous stage cost, measuring the mismatch between model outputs and the data. The function $r:\mathbb{R}^o\times\mathbb{R}^o\rightarrow \mathbb{R}_{\geq0}$ is a regularization cost (with $\bar{\theta}$ being a fixed point in the parameter space, usually taken as the origin) that can be incorporated to incentivize additional model properties (such as sparsity, low complexity, or parameter smallness), compare also \cite[Sec.~4, p.~221]{Ljung1999} and \cite{Pillonetto2025,Bemporad2023,Bemporad2025}. Note that we allow for arbitrary lower semicontinuous functions $r$ in~\eqref{eq:NLP_cost}, which covers common choices such as $L_2$, $L_1$, and $L_0$ regularization, or $r\equiv0$ (i.e., without any parameter regularization).

The criterion in \eqref{eq:NLP_cost} also includes a scalar weighting $\beta:[0,N]\rightarrow \mathbb{R}_{\geq0}$. This can be leveraged if measurements are considered to be of varying reliability or relevance \cite[Sec.~7.2]{Ljung1999}, or to include a burn-in phase to reduce the influence of model transients; compare \cite{Schiller2025b,Jaeger2002,Bonassi2022}.
Note that $\beta_j$ may also depend on $N$, which allows for the use of normalized cost functions; compare also Remarks~\ref{rem:beta} and~\ref{rem:burn_in} below.

Given a dataset $D$ as in~\eqref{eq:Di}, the model parameters~$\theta$ are then obtained by solving the following SEM optimization problem:%
\begin{subequations}
	\label{eq:NLP}
	\begin{align}
		&\min\limits_{x_0,\theta} J(x_0,\theta) \label{eq:NLP_min}\\
		\text{s.t. \quad }
		& x_{j+1} = f(x_j,u_j^\mathrm{d};\theta), \ j\in[0,N], \label{eq:NLP_f}\\
		& y_{j} = h(x_{j},u_j^\mathrm{d};\theta), \ j\in[0,N],\label{eq:NLP_h}\\
		& x_{j+1}\in\mathcal{X}, \ y_{j} \in \mathcal{Y},\ j\in[0,N] \label{eq:NLP_X}\\
		& x_0\in\mathcal{X}_0, \ \theta \in \Theta. \label{eq:NLP_Theta}
	\end{align}
\end{subequations}
Here, the constraints~\eqref{eq:NLP_f}--\eqref{eq:NLP_h} ensure consistency with the model equations~\eqref{eq:sys_1}--\eqref{eq:sys_2}.
We optimize $(x_0,\theta)$ over compact sets $\mathcal{X}_0\subset\mathbb{R}^n$ and $\Theta\subset\mathbb{R}^o$, implemented via~\eqref{eq:NLP_Theta}.
The constraints in~\eqref{eq:NLP_X} can be used to incorporate additional (e.g., physical) knowledge by specifying closed sets $\mathcal{X}\subset\mathbb{R}^n$ and/or $\mathcal{Y}\subset\mathbb{R}^p$; otherwise, these may be omitted (i.e., set to $\mathcal{X}=\mathbb{R}^n$ and $\mathcal{Y}=\mathbb{R}^p$).
We make the standing assumption that for each fixed initial condition $x_0\in \mathcal{X}_0$, there exists $\theta\in\Theta$ such that the problem~\eqref{eq:NLP} is feasible.

Due to the fact that the cost function in~\eqref{eq:NLP_cost} is lower semicontinuous, the sets $\mathcal{X}_0$ and $\Theta$ are compact by assumption, and the problem is feasible, by \cite[Th.~1.9]{Rockafellar1998} there exists an optimal solution $(x_0^*,\theta^*)$ that achieves
\begin{equation}\label{eq:NLP_case_TP}
	J(x_0^*,\theta^*) = V^* := \min_{x_0,\theta}\{ J(x_0,\theta)\ |\  \text{\eqref{eq:NLP_f}--\eqref{eq:NLP_Theta}}\}<\infty.
\end{equation}
However, the solution $(x_0^*,\theta^*)$ may be non-unique and therefore belongs to a corresponding \emph{solution set}
\begin{equation}\label{eq:S_star}
	\mathcal{S}^*:=\{(x_0,\theta)\in\mathcal{X}_0\times\Theta \ | \ J(x_0,\theta) = V^*, \text{\eqref{eq:NLP_f}--\eqref{eq:NLP_Theta}}\},
\end{equation}
including all optimal pairs of initial conditions and parameters that achieve the optimal cost $V^*$ while satisfying the constraints \eqref{eq:NLP_f}--\eqref{eq:NLP_Theta}. The conditions above ensure that $\mathcal{S}^*$ is non-empty and compact.

The SEM problem in~\eqref{eq:NLP_case_TP} involves joint optimization over parameters and initial conditions. While theoretically appealing, this drastically increases the complexity of the optimization problem. A simple yet effective approach is to fix the initial state (e.g., to zero, a random value, or a measured or estimated state) and optimize only over the parameters, see, e.g., \cite{Xu2023a,AllenZhu2019,Schiller2025b,Jaeger2002,Bonassi2022,Mohajerin2019,Masti2021,Beintema2023a,Beintema2023,Forgione2022}.
{Motivated by this, we consider a constrained SEM problem, where we fix the initial condition in~\eqref{eq:NLP} to some pre-defined $\bar{x}\in\mathcal{X}_0$.}
Following similar arguments as outlined above~\eqref{eq:NLP_case_TP}, for each $\bar{x}\in\mathcal{X}_0$ there exists an optimal parameter $\theta'$ such that
\begin{equation}\label{eq:NLP_case_1}
	J(\bar{x},\theta') = V'(\bar{x}) := \min_{\theta}\{ J(\bar{x},\theta)\ |\  \text{\eqref{eq:NLP_f}--\eqref{eq:NLP_Theta}}\} < \infty.
\end{equation}
Again, the solution $\theta'$ may be non-unique and therefore belongs to the corresponding solution set
\begin{equation}\label{eq:S_prime}
	\mathcal{S}'(\bar{x}):=\{\theta\in\Theta \ | \ J(\bar{x},\theta) = V'(\bar{x}), \text{\eqref{eq:NLP_f}--\eqref{eq:NLP_Theta}}\}.
\end{equation}

\begin{rem}[Truncated SEM]
	Our framework and resulting analysis are directly applicable to truncated SEM. Here, one seeks to minimize the simulation error over shorter, possibly overlapping subsequences extracted from the training dataset $D$, thereby improving numerical stability, efficiency, and training speed, see, e.g.,~\cite{Bonassi2022,Jaeger2002,Schiller2025b,Mohajerin2019,Forgione2022}.
	In this setting, all variables and functions in~\eqref{eq:sys_1}--\eqref{eq:NLP} are interpreted as a lifted (stacked) form over the collection of subsequences, coupled through a common parameter~$\theta$. This hence enables a systematic analysis of the effects induced by truncation; see also~\cite{Schiller2025b} for more details.
\end{rem}

\section{The turnpike phenomenon in system identification}\label{sec:main_results}

In this section, we investigate conditions under which a solution $\theta'\in\mathcal{S}'(\bar{x})$ generates an output sequence that remains close to an optimal output sequence associated with $\mathcal{S}^*$.
We characterize this notion of closeness using the cumulative turnpike property in Section~\ref{sec:turnpike}, introduce related system-theoretic properties in Section~\ref{sec:properties}, establish their equivalence in Section~\ref{sec:main_result}, and relate the cumulative notion to the weaker cardinality turnpike property in Section~\ref{sec:cardTP}.

\subsection{Cumulative turnpike characterization}\label{sec:turnpike}

We first introduce the following output-based distance measure between two pairs $(x_0,\theta),(\hat{x}_0,\hat{\theta})\in\mathcal{X}_0\times\Theta$:
\begin{equation}\label{eq:J_tilde}
	D(({x}_0,{\theta}),(\hat{x}_0,\hat{\theta})):= \sum_{j=0}^N \beta_j\|y_j({x}_0,{\theta},U^\mathrm{d})-y_j(\hat{x}_0,\hat{\theta},U^\mathrm{d})\|.
\end{equation}
For a given pair $(x_0,\theta)\in\mathcal{X}_0\times\Theta$, we are interested in its distance to the optimal solution set $S^*$ and, in particular, in those optimal pairs attaining the minimum distance. These generally form a subset $\mathcal{S}^*(x_0,\theta)\subseteq \mathcal{S}^*$ that can be defined as
\begin{equation}
	\mathcal{S}^*(x_0,\theta) := \arg\min_{(\hat{x}_0,\hat{\theta})\in\mathcal{S}^*}	D(({x}_0,{\theta}),(\hat{x}_0,\hat{\theta})). \label{eq:set_S0x}
\end{equation}
Since $S^*$ is compact and $D$ is continuous, $S^*(x_0,\theta)$ is non-empty.

We now introduce the cumulative turnpike property, which quantifies the discrepancy between the outputs generated by a constrained SEM solution $\theta'\in\mathcal{S}'(\bar{x})$ and the outputs generated by a corresponding closest optimal pair $(x_0^*,\theta^*)\in\mathcal{S}^*(\bar{x},\theta')$.

\begin{defn}[Cumulative turnpike]\label{def:intTP}
	The problem~\eqref{eq:NLP_case_1} has the \emph{cumulative turnpike property} if there exists $\alpha\in\mathcal{K}_\infty$, and $E>0$ such that for all $\bar{x}\in\mathcal{X}_0$ and all $N\in\mathbb{N}$,
	\begin{equation}\label{eq:intTP}
		\sum_{j=0}^{N}\alpha(\|y_j(\bar{x},\theta',U^\mathrm{d})-y_j(x_0^*,\theta^*,U^\mathrm{d})\|) \leq E
	\end{equation}
	for all $\theta'\in\mathcal{S}'(\bar{x})$ and all $(x_0^*,\theta^*)\in \mathcal{S}^*(\bar{x},\theta')$.
\end{defn}

Definition~\ref{def:intTP} requires the cumulative discrepancy between the outputs of the constrained and unconstrained SEM solutions to remain uniformly bounded with respect to the dataset size $N$.
In particular,~\eqref{eq:intTP} implies that
\begin{equation*}
	\lim_{N \to \infty}
	\frac{1}{N+1}\sum_{j=0}^{N}
	\alpha\left(
	\|y_j(\bar{x},\theta',U^{\mathrm d})
	-y_j(x_0^*,\theta^*,U^{\mathrm d})\|
	\right)
	=0,
\end{equation*}
that is, the average output discrepancy measured through $\alpha$ vanishes as $N\to\infty$. Intuitively, the constrained optimal output sequence therefore remains close to an optimal output sequence of the unconstrained SEM problem for most time indices, while their difference remains uniformly bounded at every time index. Here, the optimal output sequence of the unconstrained SEM problem is referred to as the \emph{turnpike}. In the non-unique case, this property is understood with respect to a closest optimal output sequence in terms of $D$.

\begin{rem}[Output turnpike]
	The turnpike property in Definition~\ref{def:intTP} is formulated in terms of outputs, rather than optimal states, controls, or adjoints, as commonly done in optimal control; see, e.g., \cite{Trelat2025,Faulwasser2022,Zaslavski2006}.
	This is natural in system identification, where the set of initial conditions and parameters generating the same output sequence is generally \emph{not} a singleton, unless sufficiently strong identifiability conditions are imposed.
\end{rem}

The presence of turnpike behavior is generally desirable in practice, as it allows one to employ the constrained SEM problem---which is computationally more tractable---while ensuring that the resulting output sequence stays close to an optimal one.
In the following, we establish connections to related system-theoretic properties such as coercivity of the value function and strict dissipativity. In Section~\ref{sec:sysid_practice}, we further show how these properties (and hence turnpike behavior) arise under suitable assumptions.

\subsection{Related system-theoretic properties}\label{sec:properties}

In the following, we introduce two related system-theoretic properties of the optimization problem~\eqref{eq:NLP}. To account for different scalings of the SEM cost, we introduce the average weight
\begin{equation}\label{eq:B_def}
	M(N):=\frac{1}{N+1}\sum_{j=0}^N\beta_j.
\end{equation}
Throughout the following, we assume that $M(N)>0$ for all $N\in\mathbb{N}$.
The first property corresponds to coercivity of the value function in~\eqref{eq:NLP_case_1}, involving $V^*$ and the set of output turnpikes.

\begin{defn}[Coercivity]\label{def:coerc}
	The value function in~\eqref{eq:NLP_case_1} is \emph{coercive} if there exist $\gamma,\alpha_\mathrm{c}\in\mathcal{K}_\infty$ and $C_\mathrm{c}\geq0$ such that for all $N\in\mathbb{N}$ and $\bar{x}\in\mathcal{X}_0$, it holds that
	\begin{equation}\label{eq:coerc}
		V'(\bar{x}) \geq V^* + M(N)\left(\gamma\left( \sum_{j=0}^{N} \alpha_\mathrm{c}(\|y'_j - y^*_j\|)\right) - C_\mathrm{c}\right)
	\end{equation}
	for all $\theta'\in\mathcal{S}'(\bar{x})$ and all $(x_0^*,\theta^*)\in \mathcal{S}^*(\bar{x},\theta')$, where $y_j' = y_j(\bar{x},\theta',U^\mathrm{d})$ and $y_j^* = y_j(x_0^*,\theta^*,U^\mathrm{d})$, $j\in[0,N]$.
\end{defn}

Definition~\ref{def:coerc} requires the value-function gap $V'(\bar{x})-V^*$, after normalization by $M(N)$, to grow with the cumulative output discrepancy between a constrained optimal solution and its corresponding turnpike, up to the offset $C_\mathrm{c}$. A similar property was previously introduced in \cite[Ass.~\textit{H}\textsubscript{4}]{Trelat2018a} in the context of continuous-time optimal control systems.

Turnpike behavior of optimal control problems is known to be closely related to the concept of dissipativity \cite{Trelat2025,Faulwasser2022,Gruene2018,Gruene2016a}. Inspired by this, we introduce a tailored notion of (strict) dissipativity of the SEM problem in~\eqref{eq:NLP_case_1}. To this end, for each $N\in\mathbb{N}$, we introduce the time-varying supply rate $s:[0,N]\times\mathcal{X}_0\times\Theta\times\mathcal{X}_0\times\Theta\rightarrow\mathbb{R}$ such that for all $j\in[0,N]$ and any two pairs $(x_0,\theta)\in\mathcal{X}_0\times\Theta$ and $(x_0^*,\theta^*)\in\mathcal{S}^*(x_0,\theta)$, it holds that
\begin{equation}
	s_j((x_0,\theta),(x_0^*,\theta^*)):= \beta_j \Delta l_j + \frac{1}{N+1}\Delta r \label{eq:diss_supply}
\end{equation}
with $\Delta l_j := l(y_j(x_0,\theta,U^\mathrm{d}),{y}_{j}^\mathrm{d}) - l(y_j(x_0^*,\theta^*,U^\mathrm{d}),{y}_{j}^\mathrm{d})$ and $\Delta r := r(\theta,\bar{\theta}) - r(\theta^*,\bar{\theta})$.
Note that by construction, for $(x_0,\theta)\in\mathcal{X}_0\times\Theta$ and $(x_0^*,\theta^*)\in\mathcal{S}^*(x_0,\theta)$, summing the supply rate $s_j$ over $j\in[0,N]$ yields the corresponding cost difference
\begin{equation}\label{eq:supply_value}
	\sum_{j=0}^{N}s_j = J(x_0,\theta)-V^*.
\end{equation}

\begin{defn}[Strict dissipativity]\label{def:diss}
	The problem in~\eqref{eq:NLP_case_1} is \emph{strictly dissipative} along solutions if there exist $\alpha_\lambda\in\mathcal{K}_\infty$ and $E_\lambda>0$ such that, for each $N\in\mathbb{N}$, there exists a time-varying storage function $\lambda{:}[0,N+1]\times\mathcal{X}_0\times\Theta\times\mathcal{X}_0\times\Theta{\,\rightarrow\,}\mathbb{R}$ such that for all $\bar{x}\in\mathcal{X}_0$, all $\theta'\in\mathcal{S}'(\bar{x})$, and all $(x_0^*,\theta^*)\in \mathcal{S}^*(\bar{x},\theta')$, it holds that $\lambda_0\leq M(N) E_\lambda$, $\lambda_{N+1}=0$, and
	\begin{align}
		&\lambda_{j+1}((\bar{x},\theta'),(x_0^*,\theta^*)) - \lambda_j((\bar{x},\theta'),(x_0^*,\theta^*)) \nonumber \\
		&\leq s_j((\bar{x},\theta'),(x_0^*,\theta^*))   \label{eq:diss_dissip}\\
		&\quad - M(N)\alpha_\lambda(\|y_j(\bar{x},\theta',U^\mathrm{d})- y_j(x_0^*,\theta^*,U^\mathrm{d})\|)  \nonumber
	\end{align}
	for all $j\in[0,N]$.
\end{defn}

We point out two main conceptual differences compared to dissipativity notions commonly used in turnpike and optimal control theory.
First, the storage function and supply rate are defined relative to a particular optimal pair $(x_0^*,\theta^*)$. Second, they may depend on the dataset size $N$.
Both aspects are natural in system identification: the turnpike may be non-unique and generally changes with the dataset size, since increasing the dataset may alter the optimal pair $(x^*_0,\theta^*)$ and hence the corresponding output sequence over the entire interval $[0,N]$. Importantly, the quantities $\alpha_\lambda$ and $E_\lambda$ in Definition~\ref{def:diss} are nevertheless required to be uniform with respect to $N$ and the considered optimal solutions.

\subsection{Equivalent characterizations}\label{sec:main_result}

In this section, we present our main result, which provides equivalent characterizations of the cumulative turnpike property. To this end, we impose a cost-reachability condition ensuring that the additional cost induced by fixing the initial state remains uniformly bounded relative to the cost scaling $M(N)$.

\begin{ass}[Cost reachability]\label{ass:reachability}
	There exists a constant $E_\mathrm{r}\geq0$ such that
	\begin{equation}\label{eq:reachability}
		V'(\bar{x}) - V^* \leq M(N)\cdot E_\mathrm{r}
	\end{equation}
	for all $\bar{x}\in\mathcal{X}_0$ and all $N\in\mathbb{N}$.
\end{ass}

We point out that similar reachability-type conditions, including related controllability or stabilizability properties, are standard in turnpike theory and are generally necessary for turnpike behavior in the context of optimal control; see, e.g., \cite[Th.~4.7]{Gruene2016a} and \cite[Ass.~1]{Gruene2018}.
In Section~\ref{sec:reachability}, we provide a sufficient condition for Assumption~\ref{ass:reachability} based on incremental output stability.

With this assumption in place, we can now state the following equivalence result.

\begin{thm}\label{thm:equivalences}
	Let Assumption~\ref{ass:reachability} be satisfied. Then, the following statements are equivalent:
	\begin{enumerate}[a)]
		\item Problem~\eqref{eq:NLP_case_1} has the cumulative turnpike property (Definition~\ref{def:intTP}).
		\item The value function in~\eqref{eq:NLP_case_1} is coercive (Definition~\ref{def:coerc}).
		\item Problem~\eqref{eq:NLP_case_1} is strictly dissipative (Definition~\ref{def:diss}).
	\end{enumerate}
\end{thm}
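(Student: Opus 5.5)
We prove the cycle b)$\Rightarrow$a)$\Rightarrow$c)$\Rightarrow$b). Throughout, fix $N\in\mathbb{N}$, $\bar{x}\in\mathcal{X}_0$, $\theta'\in\mathcal{S}'(\bar{x})$ and $(x_0^*,\theta^*)\in\mathcal{S}^*(\bar{x},\theta')$. Abbreviate $y_j'$, $y_j^*$ as in Definition~\ref{def:coerc} and set $e_j:=\|y_j'-y_j^*\|$.

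\emph{b)$\Rightarrow$a).} Combine the coercivity inequality~\eqref{eq:coerc} with Assumption~\ref{ass:reachability}. This gives
\[
M(N)\bigl(\gamma(\textstyle\sum_j\alpha_\mathrm{c}(e_j))-C_\mathrm{c}\bigr)\le V'(\bar{x})-V^*\le M(N)E_\mathrm{r}.
\]
Dividing by $M(N)>0$ yields $\sum_{j=0}^N\alpha_\mathrm{c}(e_j)\le\gamma^{-1}(E_\mathrm{r}+C_\mathrm{c})$. This is~\eqref{eq:intTP} with $\alpha=\alpha_\mathrm{c}$ and $E=\gamma^{-1}(E_\mathrm{r}+C_\mathrm{c})$, which is admissible after enlarging $E$ slightly if it happens to be zero.

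\emph{a)$\Rightarrow$c).} Here we must construct a storage function, and this is where the real work lies. We choose the constant $\alpha_\lambda:=\alpha$ and define $\lambda$ backwards as the negative accumulated tail of the strict supply:
\[
\lambda_j:=-\sum_{i=j}^{N}\bigl(s_i-M(N)\alpha(e_i)\bigr),\qquad \lambda_{N+1}=0.
\]
All quantities are evaluated along $((\bar{x},\theta'),(x_0^*,\theta^*))$. For arguments not of this form, $\lambda$ may be set to zero, since Definition~\ref{def:diss} only constrains it along solutions.

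With this choice,~\eqref{eq:diss_dissip} holds with equality, because $\lambda_{j+1}-\lambda_j=s_j-M(N)\alpha(e_j)$. It remains to check $\lambda_0\le M(N)E_\lambda$. Using~\eqref{eq:supply_value}, we obtain
\[
\lambda_0=-(V'(\bar{x})-V^*)+M(N)\textstyle\sum_j\alpha(e_j)\le M(N)E,
\]
since $V'(\bar{x})\ge V^*$: the minimization defining $V^*$ is over a larger set. Hence $E_\lambda:=E$ works. A possible subtlety is that the storage function is allowed to depend on the particular optimal pair; this dependence is permitted by the definition, which is exactly why it is allowed.

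\emph{c)$\Rightarrow$b).} Sum~\eqref{eq:diss_dissip} over $j\in[0,N]$. The left side telescopes to $\lambda_{N+1}-\lambda_0=-\lambda_0$, and the right side becomes $V'(\bar{x})-V^*-M(N)\sum_j\alpha_\lambda(e_j)$ by~\eqref{eq:supply_value}. Using $\lambda_0\le M(N)E_\lambda$, this gives
\[
V'(\bar{x})\ge V^*+M(N)\Bigl(\textstyle\sum_j\alpha_\lambda(e_j)-E_\lambda\Bigr),
\]
which is~\eqref{eq:coerc} with $\gamma=\mathrm{id}$, $\alpha_\mathrm{c}=\alpha_\lambda$ and $C_\mathrm{c}=E_\lambda$.

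Assumption~\ref{ass:reachability} is used only in the step b)$\Rightarrow$a). The remaining implications rely solely on the identity~\eqref{eq:supply_value} and the sign $V'\ge V^*$. The main point to double-check is the uniformity in $N$ of $\alpha_\lambda$ and $E_\lambda$ in a)$\Rightarrow$c). This holds because the constant $E$ in Definition~\ref{def:intTP} is uniform in $N$, $\bar{x}$ and the chosen solutions.
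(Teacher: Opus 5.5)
Your proposal is correct and follows the paper's proof essentially verbatim. It uses the same cycle of implications, the same storage function $\lambda_j=M(N)\sum_{t=j}^N\alpha(\|y'_t-y^*_t\|)-\sum_{t=j}^N s_t$ for a)$\Rightarrow$c), and the same constants $\gamma=\mathrm{id}$, $\alpha_\mathrm{c}=\alpha_\lambda$, $C_\mathrm{c}=E_\lambda$ for c)$\Rightarrow$b), with your remark about enlarging $E$ when $\gamma^{-1}(E_\mathrm{r}+C_\mathrm{c})=0$ being a small nicety the paper omits.
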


Theorem~\ref{thm:equivalences} provides equivalent characterizations of cumulative turnpike behavior in the considered non-unique setting in terms of properties of the SEM value function and the underlying optimization problem.
In particular, under cost reachability, the turnpike property can be established by verifying coercivity or, equivalently, strict dissipativity. The former characterization is exploited in Section~\ref{sec:sysid_practice}, where we provide sufficient conditions for cost reachability and coercivity that can be related to properties arising naturally in system identification. Theorem~\ref{thm:equivalences} is proven in Appendix~\ref{sec:thm_equivalences}.

\subsection{On the cardinality turnpike property}\label{sec:cardTP}

A significant part of the turnpike and optimal control literature considers characterizations that bound the number of time indices at which an optimal trajectory lies outside a neighborhood of the turnpike, commonly referred to as the \emph{cardinality turnpike property} (or, in continuous time, the \emph{measure turnpike property}); see, e.g., \cite{Faulwasser2022} and compare also \cite{Gruene2018,Zaslavski2006,Gruene2016a,Trelat2018a}.
In this section, we extend this notion to the context of system identification and relate it to the cumulative turnpike property from Definition~\ref{def:intTP}. In particular, we show that cumulative turnpike implies cardinality turnpike (Proposition~\ref{prop:intTP2cardTP}), whereas the converse does not hold in general (see Counterexample~\ref{counterex} and Remark~\ref{rem:cardTP_converse}).
Equivalence can, however, be recovered under suitable boundedness assumptions (Proposition~\ref{prop:cardTP2intTP}).

\begin{defn}[Cardinality turnpike]\label{def:cardTP}
	The problem~\eqref{eq:NLP_case_1} has the \emph{cardinality turnpike property} if there exists $\sigma\in\mathcal{L}$ such that for all $\bar{x}\in\mathcal{X}_0$, all $N\in\mathbb{N}$, and all $P\in\mathbb{N}_+$, it holds that%
	\begin{equation}\label{eq:TP_card}
		\#\mathcal{Q}(P,N)\leq P,
	\end{equation}
	for all $\theta'\in\mathcal{S}'(\bar{x})$ and all $(x_0^*,\theta^*)\in\mathcal{S}^*(\bar{x},\theta')$, where
	\begin{equation}\label{eq:TP_card_Q}
		\begin{split}
		\mathcal{Q}(P,N)&:=\big\{j\in[0,N] \mid \\
		& \|y_j(\bar{x},\theta',U^\mathrm{d})-y_j(x_0^*,\theta^*,U^\mathrm{d})\|>\sigma(P)\big\}.
		\end{split}
	\end{equation}
\end{defn}

Definition~\ref{def:cardTP} requires that, for any $P\in\mathbb{N}_+$, the constrained optimal output $y_j(\bar{x},\theta',U^\mathrm{d})$ lies outside a tube of radius $\sigma(P)$ around the corresponding turnpike $y_j({x}_0^*,\theta^*,U^\mathrm{d})$ at no more than $P$ time indices.
As $P$ increases, the tube radius decreases since $\sigma\in\mathcal{L}$, while a larger number of deviations outside the tube is permitted. Importantly, $P$ is independent of the dataset size $N$. Hence, for fixed $P$, the fraction of time indices at which the output lies outside the corresponding tube vanishes as $N\to\infty$, since~\eqref{eq:TP_card} directly implies that
\begin{equation}
	\lim_{N \to \infty}\frac{\#\mathcal{Q}(P,N)}{N+1} = 0,
\end{equation}
compare the discussion below Definition~\ref{def:intTP}.

The following result shows that the cumulative turnpike property implies the cardinality turnpike property.

\begin{prop}[Cumulative implies cardinality turnpike]\label{prop:intTP2cardTP}
	If the problem~\eqref{eq:NLP_case_1} has the cumulative turnpike property (Definition~\ref{def:intTP}), then it has the cardinality turnpike property (Definition~\ref{def:cardTP}).
\end{prop}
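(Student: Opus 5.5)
We argue via a Markov/Chebyshev-type counting bound. Let $\alpha\in\mathcal{K}_\infty$ and $E>0$ be the constants from Definition~\ref{def:intTP}. Fix arbitrary $\bar{x}\in\mathcal{X}_0$, $N\in\mathbb{N}$, $\theta'\in\mathcal{S}'(\bar{x})$ and $(x_0^*,\theta^*)\in\mathcal{S}^*(\bar{x},\theta')$, and abbreviate $e_j:=\|y_j(\bar{x},\theta',U^\mathrm{d})-y_j(x_0^*,\theta^*,U^\mathrm{d})\|$. Since $\alpha$ is strictly increasing, for any threshold $\varepsilon>0$ every index with $e_j>\varepsilon$ contributes more than $\alpha(\varepsilon)$ to the sum in~\eqref{eq:intTP}. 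All summands are nonnegative, so
\begin{equation*}
	\#\{j\in[0,N]\mid e_j>\varepsilon\}\,\alpha(\varepsilon)\leq \sum_{j=0}^N\alpha(e_j)\leq E .
\end{equation*}
The goal is therefore to choose $\varepsilon=\sigma(P)$ so that $E/\alpha(\sigma(P))\leq P$.

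The natural candidate is $\sigma(P):=\alpha^{-1}(E/P)$, which gives $\alpha(\sigma(P))=E/P$. Then $\#\mathcal{Q}(P,N)\cdot E/P\leq E$, that is, $\#\mathcal{Q}(P,N)\leq P$. This holds uniformly in $\bar{x}$, $N$, $\theta'$ and $(x_0^*,\theta^*)$, because $\alpha$ and $E$ are uniform by Definition~\ref{def:intTP}. Here $\alpha^{-1}$ exists on all of $\mathbb{R}_{\geq0}$ and is of class $\mathcal{K}_\infty$, since $\alpha\in\mathcal{K}_\infty$.

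It remains to check that $\sigma\in\mathcal{L}$. Define $\sigma(s):=\alpha^{-1}(E/s)$ for $s>0$. This is continuous and nonincreasing, indeed strictly decreasing, and it tends to $\alpha^{-1}(0)=0$ as $s\to\infty$. The one technical point is that class $\mathcal{L}$ functions are defined on $\mathbb{R}_{\geq0}$, while $E/s$ blows up at $s=0$. Only the values $P\in\mathbb{N}_+$ are used, so I would modify $\sigma$ on $[0,1]$ by setting $\sigma(s):=\alpha^{-1}(E)$ there. This keeps $\sigma$ continuous, nonincreasing and vanishing at infinity, and leaves $\sigma(P)$ unchanged for $P\geq1$. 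I do not expect a genuine obstacle; this domain adjustment at $s\in[0,1]$ is the only subtlety.
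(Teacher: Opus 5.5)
Your proposal is correct and essentially identical to the paper's proof: the paper uses the same function $\sigma(s)=\alpha^{-1}(E/s)$ for $s\geq 1$ and $\sigma(s)=\alpha^{-1}(E)$ for $0\leq s<1$. It also uses the same counting bound via $P\,\alpha(\sigma(P))=E$, only phrased as a proof by contradiction instead of your direct Markov-type inequality.
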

\begin{pf}
	Assume that problem \eqref{eq:NLP_case_1} has the cumulative turnpike property.
	Consider the function $\sigma:\mathbb{R}_{\geq0}\to\mathbb{R}_{\geq0}$ defined by
	\begin{equation}\label{eq:sigma_TP}
		\sigma(s):=
		\begin{cases}
			\alpha^{-1}\left({E}/{s}\right), & s\geq 1\\
			\alpha^{-1}({E}), & 0\leq s < 1
		\end{cases}
	\end{equation}
	with $E>0$ and $\alpha\in\mathcal{K}_\infty$ from Definition~\ref{def:intTP}. Then, $\sigma\in\mathcal{L}$.
	For the sake of contradiction, assume that the cardinality turnpike property does not hold for this choice of $\sigma$.
	This implies that there exists at least one solution $\theta'\in\mathcal{S}'(\bar{x})$ for $\bar{x}\in\mathcal{X}_0$ and a corresponding pair $(x^*_0,\theta^*)\in\mathcal{S}^*(\bar{x},\theta')$ such that more than $P$ elements satisfy $\|y_j(\bar{x},\theta',U^\mathrm{d})- y_j(x_0^*,\theta^*,U^\mathrm{d})\|>\sigma(P)$. Consequently, from~\eqref{eq:sigma_TP} and the fact that $P\geq 1$, we obtain
	\begin{equation*}
		\sum_{j=0}^{N}\alpha(\|y_j(\bar{x},\theta',U^\mathrm{d})-y_j(x_0^*,\theta^*,U^\mathrm{d})\|) > P\alpha(\sigma(P)) = E.
	\end{equation*}
	However, this contradicts~\eqref{eq:intTP}; thus, the cardinality turnpike property must hold for $\sigma$ chosen as in~\eqref{eq:sigma_TP}, completing the proof.
\end{pf}

While the cumulative turnpike always implies cardinality turnpike, the following counterexample reveals that the converse does not hold in general.

\begin{cexample}[Cardinality{\,}$\nRightarrow${\,}cumulative{\,}turnpike]\label{counterex}
	Consider the autonomous system
	\begin{equation*}
		\left\{
		\begin{matrix}
			w^+ = 2w\\
			m^+ = 0
		\end{matrix}
		\right.,
		\quad
		y =
		\begin{bmatrix}
			m(w+1) + (1-m)(\theta|w|-1)\\ (1-m)w\max\{2\theta|w|-1,0\}
		\end{bmatrix},
	\end{equation*}
	where the output is subject to an unknown parameter $\theta\in\Theta=[0,1]$.
	Define the state vector $x = (w,m)$ and consider the initial condition $x_0 = (w_0,m_0) \in\mathcal{X}_0= \{-1,1\}\times\{1\}$.
	Suppose that the output data is given by $y^\mathrm{d}_j = 0$, $j\in\mathbb{N}$. Consider the stage cost $l(y_1,y_2) = |[1,0](y_1-y_2)|$, regularization $r\equiv0$, and normalization weight $\beta_j = 1/(N+1)$, $j\in\mathbb{N}$.
	Then, the following statements apply:
	\begin{enumerate}[a)]
		\item The value functions in~\eqref{eq:NLP_case_TP} and~\eqref{eq:NLP_case_1} satisfy Assumption~\ref{ass:reachability}.
		
		\item The SEM problem~\eqref{eq:NLP_case_1} has the cardinality turnpike property (Definition~\ref{def:cardTP}).
		
		\item The SEM problem~\eqref{eq:NLP_case_1} does not have the cumulative turnpike property (Definition~\ref{def:intTP}).
	\end{enumerate}
\end{cexample}

\begin{pf}
	From the system dynamics and initial conditions, it follows that $w_j = 2^jw_0$, $j\geq0$, $m_0 = 1$, and $m_j = 0$, $j\in\mathbb{N}_+$; the corresponding outputs are $y_0(x_0,\theta) =
	(w_0 + 1, 0)$ and
	\begin{equation}\label{eq:outputs}
		y_j(x_0,\theta) = 
		\begin{bmatrix}
			\theta2^j-1\\2^jw_0\max\{2\theta2^j-1,0\}
		\end{bmatrix},\quad 
		j\in\mathbb{N}_+.
	\end{equation}
	Thus, the SEM cost function in~\eqref{eq:NLP_cost} specializes to
	\begin{equation}\label{eq:counterex:cost}
		J(x_0,\theta) = \frac{1}{N+1}\left(|w_0+1| + \sum_{j=1}^N |\theta 2^j-1|\right).
	\end{equation}
	We make the following claim.
	
	\begin{claim}\label{claim:minimizer}
		For each $N\in\mathbb{N}_{+}$, the unique minimizer of $J(x_0,\theta)$ in~\eqref{eq:counterex:cost} over $\mathcal{X}_0\times\Theta$ is $x_0^* = (-1,1)$ and $\theta^*=2^{-N}$.
	\end{claim}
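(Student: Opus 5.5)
The cost in \eqref{eq:counterex:cost} separates into two parts. The first summand $|w_0+1|$ depends only on $x_0$, and the remaining sum depends only on $\theta$. We can therefore minimize each part independently over its own factor of $\mathcal{X}_0\times\Theta$. For the initial condition, $\mathcal{X}_0=\{-1,1\}\times\{1\}$ gives $|w_0+1|\in\{0,2\}$. The value $0$ is attained only at $w_0=-1$, so $x_0^*=(-1,1)$ is forced. It then remains to show that
\begin{equation*}
	g(\theta):=\sum_{j=1}^N |\theta 2^j-1|
\end{equation*}
has the unique minimizer $\theta^*=2^{-N}$ on $\Theta=[0,1]$.

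The function $g$ is convex and piecewise linear in $\theta$, with kinks at $\theta=2^{-j}$, $j\in[1,N]$. All kinks lie in $(0,1/2]\subset[0,1]$, so the constraint set $\Theta$ plays no role. To locate the minimizer, I will compute the one-sided slopes:
\begin{equation*}
	g'(\theta)=\sum_{j:\,\theta 2^j>1}2^j-\sum_{j:\,\theta2^j<1}2^j
\end{equation*}
away from the kinks. On the interval $\theta<2^{-N}$, every term satisfies $\theta2^j<1$, so the slope is $-\sum_{j=1}^N2^j<0$.

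Just to the right of $2^{-N}$, the term $j=N$ becomes active and the slope becomes $2^N-\sum_{j=1}^{N-1}2^j=2^N-(2^N-2)=2>0$. Since $g$ is convex, its slope is nondecreasing, so it stays positive for all larger $\theta$. Hence $g$ is strictly decreasing on $[0,2^{-N}]$ and strictly increasing on $[2^{-N},1]$, and $2^{-N}$ is the unique minimizer.

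The one step that needs care is exactly this slope sign change at the last kink. It rests on the geometric-sum identity $\sum_{j=1}^{N-1}2^j=2^N-2<2^N$, which is why the dominant term $j=N$ pins the minimizer at $2^{-N}$ rather than at a median-type point among the kinks. For $N=1$ the sum $\sum_{j=1}^{0}$ is empty, and the argument goes through directly: $g(\theta)=|2\theta-1|$, minimized uniquely at $\theta=1/2$.

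Finally, uniqueness of the joint minimizer follows from the separable structure. The map $(x_0,\theta)\mapsto|w_0+1|+g(\theta)$ attains its minimum exactly when both parts are minimal, so the unique minimizer is $(x_0^*,\theta^*)=((-1,1),2^{-N})$.
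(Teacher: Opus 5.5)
Your proof is correct and follows the same route as the paper: split off the $x_0$-dependent term $|w_0+1|$, then show that the piecewise-linear function $\sum_{j=1}^N|\theta 2^j-1|$ has negative slope on $(0,2^{-N})$ and positive slope to the right of $2^{-N}$. The only difference is that you use convexity (nondecreasing slopes), so computing the slope $2$ on $(2^{-N},2^{-(N-1)})$ suffices, whereas the paper explicitly computes the slope $2^{N+1}-2^{N-k+2}+2>0$ on every interval $(2^{-(N-k+1)},2^{-(N-k)})$, $k\in[1,N]$.
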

	\begin{pf}
		The minimizer $x_0^*=(-1,1)$ is obvious.
		Since for fixed $N\in\mathbb{N}_{+}$, the factor $1/(N+1)$ does not affect the minimizer, it remains to show that $\theta=2^{-N}$ is the unique minimizer of $F(\theta) := \sum_{j=1}^N|\theta2^j-1|$ over $\Theta=[0,1]$, where $F(\theta)$ corresponds to the parameter-dependent part of the cost function $J(x_0^*,\theta)$.
		%, with its derivative given by $dF/d\theta = \sum_{j=1}^N\frac{2^j(\theta2^j - 1)}{\sqrt{(\theta2^j - 1)^2}}$.
		%For the sake of contradiction, suppose that $\theta^*\neq2^{-N}$.
		%For $\theta^*>2^{-N}$, note that $\frac{2^j(\theta2^j - 1)}{\sqrt{(\theta2^j - 1)^2}}$
		First, note that each summand $f_j(\theta):=|\theta2^j-1|$, $j\in[1,N]$ attains its unique minimum at $\theta=2^{-j}$ and is differentiable for $\theta\neq2^{-j}$, with
		\begin{equation}
			\frac{df_j}{d\theta}(\theta) =
			\begin{cases}
				-2^j, & \theta < 2^{-j} \\ 2^j, & \theta>2^{-j}
			\end{cases}, \qquad j\in[1,N].
		\end{equation}
		
		Consider $0<\theta<2^{-N}$. Since $2^{-N}\leq2^{-j}$ for all $j\in[1,N]$, it follows that ${df_j}/{d\theta}(\theta)<0$ for all $j\in[1,N]$.
		Consequently, since $F(\theta)$ is continuous on $[0,2^{-N}]$ and ${dF}/{d\theta}(\theta)<0$ on $(0,2^{-N})$, the mean value theorem implies that $F(\theta)$ is strictly decreasing on $[0,2^{-N}]$.
		
		Now consider $2^{-(N-k+1)}<\theta<2^{-(N-k)}$ for $k\in[1,N]$. We have $df_j/d\theta(\theta)=-2^j$ for $j\in[1,N-k]$, while $df_j/d\theta(\theta)=2^j$ for $j\in[N-k+1,N]$.
		Thus,
		\begin{align*}
			\frac{dF}{d\theta}(\theta)
			&= \sum_{j=N-k+1}^N2^j - \sum_{j=1}^{N-k}2^j \\
			&= \frac{2^{N-k+1}-2^{N+1}}{-1} - \frac{2-2^{N-k+1}}{-1}\\
			&= -2^{N-k+1}+2^{N+1} + 2-2^{N-k+1}\\
			&= 2^{N+1} -2^{N-k+2} + 2 > 0
		\end{align*}
		for all $k\in[1,N]$ and each $N\in\mathbb{N}_{+}$.
		Thus, $dF/d\theta(\theta)$ is strictly positive on each open interval $(2^{-(N-k+1)},2^{-(N-k)})$, $k\in[1,N]$. By continuity, $F(\theta)$ is therefore strictly increasing on the closed interval $[2^{-N},1]$.
		As $F(\theta)$ is strictly decreasing on $[0,2^{-N}]$, we can infer that $\theta=2^{-N}$ is the unique minimizer of $F(\theta)$ on $[0,1]=\Theta$, proving the claim.
	\end{pf}
	
	Due to the structure of the cost function in~\eqref{eq:counterex:cost}, for each $N\in\mathbb{N}_+$, the solution of the constrained SEM problem in~\eqref{eq:NLP_case_1} coincides with the optimal parameter $\theta^*$, that is, $\theta'=\theta^*$ for both choices of $\bar{x}\in\mathcal{X}_0$. For $N=0$, the cost is independent of $\theta$.
	
	To establish the cost reachability property in Assumption~\ref{ass:reachability}, we note that $V'(\bar{x}) - V^* = |\bar{w}+1|/(N+1)$ for $\bar{x} = (\bar{w},1) \in\mathcal{X}_0$. Using the definition of $\mathcal{X}_0$ and the fact that $M(N) = 1/(N+1)$ for each $N\in\mathbb{N}$ with $M(N)$ from~\eqref{eq:B_def}, we obtain that $V'(\bar{x}) - V^* \leq 2/(N+1) = 2M(N)$ uniformly for all $\bar{x}\in\mathcal{X}_0$ and all $N\in\mathbb{N}$.
	Thus, Assumption~\ref{ass:reachability} holds, which establishes the first statement.
	
	Let $y'_j := y_j(\bar{x},\theta')$ and $y^*_j := y_j({x}_0^*,\theta^*)$, $j\in[0,N]$.
	We now provide an estimate for $\|y'_j-y^*_j\|$ over $j\in[0,N]$.
	
	\begin{claim}\label{claim:ydiff}
		For all $\bar{x}\in\mathcal{X}_0$ and all $N\in\mathbb{N}$, we have
		\begin{equation}\label{eq:counterex:y_diff}
			\|y'_j-y^*_j\| \leq
			\begin{cases}
				2, & j = 0 \\
				0, & j \in [1,N-1] \\
				2^{N+1}, & j=N
			\end{cases}.
		\end{equation}
		Moreover, for the particular choice $\bar{x}=(1,1)$ and $N\in\mathbb{N}_+$,~\eqref{eq:counterex:y_diff} holds with equality at the terminal index $j=N$.
	\end{claim}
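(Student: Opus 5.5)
The plan is to compute both output sequences explicitly from \eqref{eq:outputs} and the expression $y_0(x_0,\theta)=(w_0+1,0)$, and then bound their difference componentwise. We use Claim~\ref{claim:minimizer} and the subsequent observation that, for $N\in\mathbb{N}_+$, $\theta'=\theta^*=2^{-N}$ for both choices of $\bar{x}$.

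First we identify the comparison pair $(x_0^*,\theta^*)\in\mathcal{S}^*(\bar{x},\theta')$. For $N\in\mathbb{N}_+$, Claim~\ref{claim:minimizer} shows that $\mathcal{S}^*=\{((-1,1),2^{-N})\}$ is a singleton, so this set is just that pair. The case $N=0$ is treated separately. There only the index $j=0$ occurs, the outputs are $(\bar{w}+1,0)$ and $(w_0^*+1,0)$ with $\bar{w},w_0^*\in\{-1,1\}$, and so the difference is at most $2$ (for $N=0$ the index $0$ is simultaneously the terminal one, and $2\le 2^{1}$ anyway).

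For $N\in\mathbb{N}_+$ we then go index by index.
\begin{itemize}
\item At $j=0$: $y'_0-y^*_0=(\bar{w}+1-0,\,0)$, whose norm is $|\bar w+1|\le 2$.
\item For $j\in[1,N]$: both outputs share $\theta=2^{-N}$, so their first components coincide (both equal $2^{j-N}-1$). The second components are $2^j w_0\max\{2^{j-N+1}-1,0\}$ with $w_0=\bar w$ and $w_0=-1$, respectively.
\item For $j\le N-1$: $2^{j-N+1}\le 1$, so the max vanishes and the difference is $0$.
\item For $j=N$: the max equals $1$, the second components are $2^N\bar w$ and $-2^N$, and the difference has norm $2^N|\bar w+1|\le 2^{N+1}$, with equality exactly when $\bar w=1$, i.e., $\bar{x}=(1,1)$.
\end{itemize}
This gives \eqref{eq:counterex:y_diff} together with the equality statement.

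The only delicate point is making sure the closest-optimal-pair set $\mathcal{S}^*(\bar{x},\theta')$ is pinned down correctly. Uniqueness in Claim~\ref{claim:minimizer} removes any ambiguity for $N\ge1$, and the $N=0$ case needs only the crude bound. A second point is checking the threshold $2\theta 2^j-1\le 0$ precisely at $j=N-1$, where $2\cdot2^{-N}2^{N-1}=1$, so the max is exactly zero.
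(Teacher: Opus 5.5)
Your proposal is correct and follows essentially the same route as the paper: you evaluate the outputs from \eqref{eq:outputs} index by index using $\theta'=\theta^*=2^{-N}$, and you treat $N=0$ separately. The differences are cosmetic. The paper first reduces to $\bar{x}=(1,1)$, since the other choice coincides with $x_0^*$, whereas you carry the factor $|\bar{w}+1|$ throughout; you also make explicit that $\mathcal{S}^*(\bar{x},\theta')=\mathcal{S}^*$ is a singleton for $N\geq1$, which the paper leaves implicit.
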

	\begin{pf}
		First, note that $\bar{x}\in\mathcal{X}_0$ implies that either $\bar{x}=(1,1)$ or $\bar{x}=(-1,1)=x_0^*$. Since $\theta'=\theta^*$, it therefore suffices to show that~\eqref{eq:counterex:y_diff} holds for the former choice of $\bar{x}$.

		For brevity, let $\bar{w}=1$ and $w^*=-1$. Consider $N\in\mathbb{N}_+$.
		At $j=0$, we have that $\|y'_0-y^*_0\| = |\bar{w}-w_0^*| = 2$.
		For $j\in[1,N-1]$, the max-term in \eqref{eq:outputs} evaluates to $\max\{2\theta^*2^j-1,0\}  = \max\{2^{-N+1+j}-1,0\} = 0$, where the first equality follows from Claim~\ref{claim:minimizer}. Consequently, $\|y'_j-y^*_j\| = 0$ for all $j\in[1,N-1]$.
		At the terminal index $j=N$, we have $\max\{2\theta^*2^j-1,0\} = \max\{1,0\} = 1$. Thus, $\|y'_N-y^*_N\| = 2^N(\bar{w}-w^*) = 2^{N+1}$.
		Combining the three cases proves the claim for $N\in\mathbb{N}_+$. For $N=0$, the bound follows directly from $y_0(x_0,\theta)=(w_0+1,0)$, independently of $\theta$.
	\end{pf}
	
	We now show that the cardinality turnpike property holds. To this end, let $\sigma(s) = 4/(s+1)$, $s\in\mathbb{R}_{\geq0}$. Then, $\sigma\in\mathcal{L}$. Moreover, for $P=1$, we have that $\sigma(P) = 2$. From~\eqref{eq:counterex:y_diff}, the set $\mathcal{Q}$ as defined in~\eqref{eq:TP_card_Q} contains at most the terminal index $j=N$, which verifies $\#\mathcal{Q}(P,N)\leq P$ for $P=1$.
	For any $P\geq 2$, \eqref{eq:counterex:y_diff} implies that $\mathcal{Q}$ contains at most two indices $j\in[0,N]$; namely, $j=0$ and $j=N$.
	Therefore, $\#\mathcal{Q}(P,N)\leq P$, for all $P\in\mathbb{N}_+$ and all $N\in\mathbb{N}$, and the cardinality turnpike property holds, establishing the second statement.
	
	Finally, consider $\bar{x}=(1,1)$. From Claim~\ref{claim:ydiff}, for every $\alpha\in\mathcal{K}_\infty$, it follows that $\sum_{j=0}^N\alpha(\|y'_j-y^*_j\|) \geq \alpha(\|y'_N-y^*_N\|)= \alpha(2^{N+1}) \to \infty$ for $N\to\infty$. Thus, the cumulative turnpike property does not hold.
	This establishes the third statement and hence completes the proof.
\end{pf}

\begin{rem}[Cardinality and cumulative turnpike]\label{rem:cardTP_converse}
	From Proposition~\ref{prop:intTP2cardTP} and Counterexample~\ref{counterex}, the cardinality turnpike property is strictly weaker than the cumulative turnpike property, even under Assumption~\ref{ass:reachability}.
	In fact, the essential difference is that cardinality turnpike bounds the number of exceptional time indices but imposes no restriction on the magnitude of the corresponding deviations. In contrast, the cumulative turnpike property also yields a uniform bound on every individual output discrepancy through~\eqref{eq:intTP}.
	Equivalence can be recovered under additional uniform boundedness assumptions, e.g., by restricting the identification problem to a compact output set $\mathcal{Y}$ via~\eqref{eq:NLP_X}, as shown in Proposition~\ref{prop:cardTP2intTP} below.
\end{rem}

\begin{prop}[Cumulative{\,}turnpike{\,}under{\,}compactness]
	\label{prop:cardTP2intTP}
	Assume that the output set $\mathcal{Y}$ in \eqref{eq:NLP_X} is compact.
	If problem~\eqref{eq:NLP_case_1} has the cardinality turnpike property (Definition~\ref{def:cardTP}), then it has the cumulative turnpike property (Definition~\ref{def:intTP}).
\end{prop}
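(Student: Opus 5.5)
Since $\mathcal{Y}$ is compact, every feasible output lies in $\mathcal{Y}$. Both the constrained solution $(\bar{x},\theta')$ and the optimal pair $(x_0^*,\theta^*)$ satisfy the constraints \eqref{eq:NLP_X}, so all their outputs belong to $\mathcal{Y}$. Setting $d_{\mathcal Y}:=\max_{y,\hat y\in\mathcal{Y}}\|y-\hat y\|<\infty$, we get the uniform bound $e_j:=\|y_j(\bar{x},\theta',U^\mathrm{d})-y_j(x_0^*,\theta^*,U^\mathrm{d})\|\leq d_{\mathcal Y}$ for all $j\in[0,N]$, all $N$, and all admissible $\bar x,\theta',(x_0^*,\theta^*)$. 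This is exactly the bound on the magnitude of individual deviations that fails in Counterexample~\ref{counterex}.

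The plan is a layer-cake argument. We fix the $\sigma\in\mathcal{L}$ from Definition~\ref{def:cardTP} and construct $\alpha\in\mathcal{K}_\infty$ so that $\alpha(\sigma(P))$ decays fast enough in $P$ to be summable. The indices $j\in[0,N]$ are split into level sets. Let $A_0:=\{j: e_j>\sigma(1)\}$, which by \eqref{eq:TP_card} with $P=1$ has at most one element. For $k\geq 1$, let $A_k:=\{j:\sigma(k+1)<e_j\leq\sigma(k)\}\subseteq\mathcal{Q}(k+1,N)$, so $\#A_k\leq k+1$. Finally, let $A_\infty$ collect the indices with $e_j\leq\lim_k\sigma(k)=0$, i.e.\ $e_j=0$, which contribute nothing. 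Using monotonicity of $\alpha$ and the bound $e_j\le d_{\mathcal Y}$ on $A_0$,
\begin{equation*}
\sum_{j=0}^N\alpha(e_j)\leq \alpha(d_{\mathcal Y}) + \sum_{k=1}^{\infty}(k+1)\,\alpha(\sigma(k)).
\end{equation*}
Compactness enters only in the first term; without it, that single outlier is unbounded.

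The main obstacle is choosing $\alpha\in\mathcal{K}_\infty$ so that $\sum_k (k+1)\alpha(\sigma(k))<\infty$, given only that $\sigma$ is nonincreasing with $\sigma(k)\to0$. If $\sigma(k)=0$ from some $k$ on, the series is finite, so assume $\sigma(k)>0$ for all $k$. Pick a strictly decreasing sequence $s_k\downarrow0$ with $s_k\geq\sigma(k)$, for instance $s_k:=\sigma(k)+1/k$. Define $\alpha$ to be piecewise linear with $\alpha(s_k)=2^{-k}/(k+1)^2$, $\alpha(0)=0$, and linear growth beyond $s_1$. The node values are strictly decreasing in $k$ and $s_k$ is strictly decreasing, so $\alpha$ is continuous, strictly increasing, and unbounded, i.e.\ $\alpha\in\mathcal{K}_\infty$. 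Then $(k+1)\alpha(\sigma(k))\leq(k+1)\alpha(s_k)\leq 2^{-k}$, so the series is bounded by $1$.

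Consequently $E:=\alpha(d_{\mathcal Y})+1$ (or any larger positive constant) bounds the sum uniformly in $\bar x$, $N$, $\theta'$ and $(x_0^*,\theta^*)$, which establishes \eqref{eq:intTP}. If $\sigma(1)$ happens to exceed $d_{\mathcal Y}$, the set $A_0$ is empty and the argument only gets easier.
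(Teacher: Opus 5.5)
Your proof is correct and follows essentially the same route as the paper. Compactness of $\mathcal{Y}$ bounds the single exceptional deviation, the cardinality property controls the remaining ones, and $\alpha$ is built from the shifted sequence $\sigma(k)+1/k$ so that $\alpha\circ\sigma$ becomes summable. The only difference is that you count deviations in the level sets $(\sigma(k+1),\sigma(k)]$, while the paper sorts them and shows the $P$-th largest is at most $\sigma(P)$; this costs you a factor $k+1$, which your faster-decaying piecewise-linear $\alpha$ absorbs, whereas the paper's choice $\alpha=1/(\tilde{\sigma}^{-1})^2$ with $\tilde{\sigma}(s)=\sigma(s)+1/s$ gives $\alpha(\sigma(P))<1/P^2$ directly.
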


\begin{pf}
	Consider any $N\in\mathbb{N}$, $\bar{x}\in\mathcal{X}_0$, $\theta'\in\mathcal{S}'(\bar{x})$, and $(x_0^*,\theta^*)\in\mathcal{S}^*(\bar{x},\theta')$ and define the discrepancies $d_j := \|y_j(\bar{x},\theta',U^\mathrm{d})-y_j(x_0^*,\theta^*,U^\mathrm{d})\|$, $j\in[0,N]$.
	Compactness of $\mathcal{Y}$ implies that $d_j\leq \bar{d}:= \max_{y_1,y_2\in\mathcal{Y}}\|y_1-y_2\|<\infty$ for all $j\in[0,N]$.
	
	Since $N$ is finite, there exists a bijection $\pi:[0,N]\rightarrow[0,N]$ such that the sequence $\{d_{\pi(j)}\}_{j=0}^N$ is non-increasing, i.e., $d_{\pi(j)}\geq d_{\pi(j+1)}$ for all $j\in[0,N-1]$.
	Under the cardinality turnpike property, we note that $d_{\pi(P)}\leq \sigma(P)$ for all $P\in[1,N]$.
	Indeed, if $d_{\pi(P)}>\sigma(P)$ for some $P\in[1,N]$,	then $d_{\pi(j)}>\sigma(P)$ for all $j\in[0,P]$, which yields $\#\mathcal{Q}(P,N)\geq P+1$ and contradicts \eqref{eq:TP_card}.
	
	Now define $\tilde{\sigma}(s) := \sigma(s) + 1/s$, $s> 0$. Since $\sigma\in\mathcal{L}$, $\tilde{\sigma}$ is continuous and strictly decreasing on $(0,\infty)$, with $\lim_{s \to 0^+}\tilde{\sigma}(s)=\infty$ and $\lim_{s \to \infty}\tilde{\sigma}(s)=0$. Hence, $\tilde{\sigma}:(0,\infty)\rightarrow(0,\infty)$ is bijective and its inverse $\tilde{\sigma}^{-1}$ is well-defined on $(0,\infty)$.
	Let $\alpha(s):={1}/{(\tilde{\sigma}^{-1}(s))^{2}}$, $s>0$, and $\alpha(0):=0$.
	Since $\tilde{\sigma}^{-1}$ is continuous, strictly decreasing, and satisfies $\lim_{s \to 0^+}\tilde{\sigma}^{-1}(s)=\infty$ and $\lim_{s \to \infty}\tilde{\sigma}^{-1}(s)=0$, it follows that $\alpha$ is of class $\mathcal{K}_\infty$.
	Moreover, $\alpha({\sigma}(s))<\alpha(\tilde{\sigma}(s)) = {1}/{(\tilde{\sigma}^{-1}(\tilde{\sigma}(s)))^{2}} = 1/s^2$ for all $s>0$. Consequently,
	$\sum_{s=1}^\infty\alpha({\sigma}(s)) \leq \sum_{s=1}^\infty{1}/{s^2} = {\pi^2}/{6} < \infty$.

	Combining the above estimates yields
	\begin{equation*}
	\begin{split}
		\sum_{j=0}^N\alpha(d_j) &= \sum_{j=0}^N\alpha(d_{\pi(j)})= \alpha(d_{\pi(0)}) + \sum_{P=1}^N\alpha(d_{\pi(P)})\\
		&\leq \alpha(\bar{d}) + \sum_{P=1}^N\alpha(\sigma(P))
		\leq \alpha(\bar{d}) + \pi^2/6 =:E,
	\end{split}
	\end{equation*}
	with $0<E<\infty$ being independent of $N\in\mathbb{N}$, $\bar{x}\in\mathcal{X}_0$, $\theta'\in\mathcal{S}'(\bar{x})$, and $(x_0^*,\theta^*)\in\mathcal{S}^*(\bar{x},\theta')$. Hence, Definition~\ref{def:intTP}
	is satisfied, which completes the proof.
\end{pf}

Hence, under compactness of \(\mathcal Y\), Propositions~\ref{prop:intTP2cardTP} and~\ref{prop:cardTP2intTP} imply that the cumulative and cardinality turnpike properties are, in fact, equivalent.
Notably, the proof of Proposition~\ref{prop:cardTP2intTP} can be directly linked to the mechanism underlying Counterexample~\ref{counterex}: while cardinality turnpike permits one exceptional output discrepancy to grow arbitrarily large with \(N\), compactness of \(\mathcal Y\) rules out precisely this possibility.

\section{Sufficient conditions for turnpike behavior}\label{sec:sysid_practice}

In this section, we provide sufficient conditions for the occurrence of turnpike behavior by exploiting the equivalence established in Theorem~\ref{thm:equivalences}.
Specifically, in Section~\ref{sec:reachability}, we establish cost reachability from an incremental output stability property, while in Section~\ref{sec:coerc}, we derive coercivity of the value function under suitable convexity and optimality conditions. Both results require a certain uniformity of the weights $\beta_j$ relative to their average $M(N)$; see Remark~\ref{rem:beta} for further discussion.

\subsection{Output stability implies cost reachability}
\label{sec:reachability}

We impose the following uniform continuity property on the stage cost~$l$.

\begin{ass}\label{ass:Lipschitz}
	The function $l:\mathbb{R}^p\times\mathbb{R}^p\rightarrow\mathbb{R}_{\geq0}$ in~\eqref{eq:NLP_cost} is locally Lipschitz continuous, i.e., for any compact sets $\mathcal{Y},\mathcal{Y}^\mathrm{d}\subset \mathbb{R}^p$, there exists $L>0$ such that for all $y_1,y_2\in\mathcal{Y}$, $\|l(y_1,y^\mathrm{d})-l(y_2,y^\mathrm{d})\| \leq L\|y_1-y_2\|$ uniformly for $y^\mathrm{d}\in\mathcal{Y}^\mathrm{d}$.
\end{ass}

Assumption~\ref{ass:Lipschitz} is a standard (mild) condition and is satisfied, in particular, for the practically relevant case of squared $L_2$-norm penalties.
Cost reachability occurs if the system  model additionally possesses the following stability property.

\begin{ass}[Incremental output stability]\label{ass:output_stability}
	The model in \eqref{eq:sys_1}--\eqref{eq:sys_2} is uniformly exponentially incrementally output stable over $\Theta$ along the training input sequence $U^\mathrm{d}$, that is, there exist constants $C>0$ and $\lambda\in(0,1)$ such that for all $N\in\mathbb{N}$,
	\begin{equation}
		\|y_j({x}_0,{\theta},U^\mathrm{d})-y_j(\hat{x}_0,{\theta},U^\mathrm{d})\| \leq C\lambda^j\|{x}_0-\hat{x}_0\| \label{eq:output_stability}
	\end{equation}
	for all $j\in[0,N]$, all $x_0,\hat{x}_0\in\mathcal{X}_0$, and all $\theta\in\Theta$.
\end{ass}

Assumption~\ref{ass:output_stability} essentially requires that the remote past of the learned process should be ``forgotten'' at an exponential rate. Thus, when being evaluated over the training input data, the identified model generates output sequences that converge to each other, despite different initial states.
Similar conditions are standard in the system identification literature \cite[Cond.~S3]{Ljung1978}, \cite[Cond.~1]{Beintema2023a}, although they are typically imposed on the true data-generating system.
In contrast, Assumption~\ref{ass:output_stability} concerns the learned model itself and may therefore be assessed after training or explicitly enforced during training; compare also \cite[Sec.~4.2]{Schiller2025b}.

\begin{prop}\label{prop:reachability}
	Let $\mathcal{Y}$ be compact and assume that any pair $(x_0,\theta)\in\mathcal{X}_0\times\Theta$ is feasible in~\eqref{eq:NLP}. Suppose that Assumptions~\ref{ass:Lipschitz} and \ref{ass:output_stability} are satisfied. Furthermore, suppose that there exists a constant $\overline{\beta}>0$ such that
	\begin{equation}\label{eq:beta_upper_bound}
		\beta_j \leq \overline{\beta} M(N), \quad j\in[0,N]
	\end{equation}
	uniformly for all $N\in\mathbb{N}$ with $M(N)$ from~\eqref{eq:B_def}.
	Then, Assumption \ref{ass:reachability} holds.
\end{prop}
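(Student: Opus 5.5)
The plan is to bound $V'(\bar{x})-V^*$ by evaluating the constrained cost at a suboptimal but feasible parameter. Fix $N\in\mathbb{N}$ and $\bar{x}\in\mathcal{X}_0$, and take any optimal pair $(x_0^*,\theta^*)\in\mathcal{S}^*$, which is non-empty. Since every pair in $\mathcal{X}_0\times\Theta$ is feasible by assumption, $(\bar{x},\theta^*)$ is feasible in~\eqref{eq:NLP_case_1}. Optimality of $V'(\bar{x})$ then gives $V'(\bar{x})\leq J(\bar{x},\theta^*)$. The regularization terms coincide because both cost evaluations use the same parameter $\theta^*$, so
\begin{equation*}
V'(\bar{x})-V^* \leq \sum_{j=0}^N \beta_j\big(l(y_j(\bar{x},\theta^*,U^\mathrm{d}),y_j^\mathrm{d}) - l(y_j(x_0^*,\theta^*,U^\mathrm{d}),y_j^\mathrm{d})\big).
\end{equation*}

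Next, bound each summand using Assumption~\ref{ass:Lipschitz}. Both output sequences are feasible, so by~\eqref{eq:NLP_X} they lie in the compact set $\mathcal{Y}$. The data $y_j^\mathrm{d}$ lie in the compact set $\mathcal{Y}^\mathrm{d}$. Hence there is a Lipschitz constant $L>0$, independent of $N$, $j$, $\bar{x}$ and $\theta^*$, such that each summand is at most $L\|y_j(\bar{x},\theta^*,U^\mathrm{d})-y_j(x_0^*,\theta^*,U^\mathrm{d})\|$. Assumption~\ref{ass:output_stability} applies with the common parameter $\theta^*$, so this is further bounded by $LC\lambda^j\|\bar{x}-x_0^*\|\leq LC\lambda^j d_0$. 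Here $d_0:=\max_{x,\hat{x}\in\mathcal{X}_0}\|x-\hat{x}\|<\infty$ is finite by compactness of $\mathcal{X}_0$.

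Finally, combine this with~\eqref{eq:beta_upper_bound}:
\begin{equation*}
V'(\bar{x})-V^*\leq \overline{\beta}M(N)\,LCd_0\sum_{j=0}^N\lambda^j\leq M(N)\,\frac{\overline{\beta}LCd_0}{1-\lambda}.
\end{equation*}
This yields~\eqref{eq:reachability} with $E_\mathrm{r}:=\overline{\beta}LCd_0/(1-\lambda)$, which is uniform in $\bar{x}$ and $N$.

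The only delicate point is uniformity. The Lipschitz constant must not depend on $N$, which is why compactness of $\mathcal{Y}$, and not merely of the reachable outputs for a fixed horizon, is needed. The geometric series must absorb the growing horizon, and~\eqref{eq:beta_upper_bound} is exactly what allows $\beta_j$ to be replaced by $M(N)$ uniformly in $j$. Everything else is routine.
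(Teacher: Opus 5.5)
Your proposal is correct and follows essentially the same route as the paper's proof. Both bound $V'(\bar x)\le J(\bar x,\theta^*)$, cancel the regularization term, and apply the Lipschitz bound on the compact set $\mathcal{Y}$, incremental output stability with the common $\theta^*$, the weight bound $\beta_j\le\overline{\beta}M(N)$, the geometric series, and the diameter of $\mathcal{X}_0$, arriving at the same constant $E_\mathrm{r}=\overline{\beta}LCC_{\mathcal{X}}/(1-\lambda)$.
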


\begin{pf}	
	Consider the value functions from~\eqref{eq:NLP_case_TP} and~\eqref{eq:NLP_case_1}. By optimality, these obey the following relations:
	\begin{equation}\label{eq:proof_optimality}
		V^* = J(x_0^*,\theta^*) \leq V'(\bar{x}) = J(\bar{x},\theta') \leq J(\bar{x},\theta^*)
	\end{equation}
	for any choice of $(x_0^*,\theta^*)\in\mathcal{S}^*$, $\theta'\in\mathcal{S}'(\bar{x})$, and $\bar{x}\in\mathcal{X}_0$.
	Applying the triangle inequality, non-negativity of $\beta_j$, Assumption~\ref{ass:Lipschitz}, the uniform upper bound of $\beta_j$ in~\eqref{eq:beta_upper_bound}, Assumption~\ref{ass:output_stability}, and the geometric series leads to
	\begin{align}
		&V'(\bar{x}) - V^* \leq \|J(\bar{x},\theta^*) - J(x_0^*,\theta^*)\|\nonumber\\
		&\leq  \sum_{j=0}^N \beta_j\left\| l({y}_j(\bar{x},\theta^*,U^\mathrm{d}),{y}_{j}^\mathrm{d}) - l({y}_j(x_0^*,\theta^*,U^\mathrm{d}),{y}_{j}^\mathrm{d}) \right\|\nonumber\\
		&\leq M(N)\overline{\beta} L  \sum_{j=0}^N \left\|{y}_j(\bar{x},\theta^*,U^\mathrm{d})-{y}_j(x_0^*,\theta^*,U^\mathrm{d})\right\|\nonumber\\
		&\leq M(N) \frac{\overline{\beta}LC}{1-\lambda} \|\bar{x}-x_0^*\|. \label{eq:proof_cost_reachability}
	\end{align}
	As $\mathcal{X}_0$ is compact, there exists $C_\mathcal{X} := \max_{x_1,x_2\in\mathcal{X}_0}\|x_1-x_2\|$ with $C_\mathcal{X}<\infty$. Thus, letting $E_\mathrm{r} = \overline{\beta}{LCC_\mathcal{X}}{/(1-\lambda)}$ in \eqref{eq:proof_cost_reachability} establishes~\eqref{eq:reachability}, completing the proof.
\end{pf}

\subsection{Optimality implies coercivity}\label{sec:coerc}

In the following, we establish coercivity of the value function under suitable conditions.
Here, we consider $r\equiv0$ for simplicity and restrict ourselves to convex stage costs $l$.

\begin{ass}[Uniform strict convexity]\label{ass:convex_lr}
	The function $l$ is uniformly strictly convex with modulus $\phi\in\mathcal{K}_\infty$, i.e., it holds that
	\begin{equation*}
		\begin{split}
			l(ty_1+(1-t)y_2,y^\mathrm{d})
			\leq&\ tl(y_1,y^\mathrm{d})+(1-t)l(y_2,y^\mathrm{d})\nonumber\\
			&\ -t(1-t)\phi(\|y_1-y_2\|)
		\end{split}
	\end{equation*}
	for all $y_1,y_2\in\mathcal{Y}$ uniformly for all $y^\mathrm{d}\in\mathcal{Y}^\mathrm{d}$ and $t\in[0,1]$.
\end{ass}

Note that Assumption~\ref{ass:convex_lr} is immediately satisfied if $l$ is strongly convex (which applies, e.g., for squared $L_2$-norm penalties), yielding $\phi(s)={\kappa}s^2$ for some parameter $\kappa>0$.
We now state the required optimality condition.

\begin{ass}[Optimality condition]\label{ass:optimality}
	There exists a constant $K\geq0$ such that, for all $N\in\mathbb{N}$ and $\bar{x}\in\mathcal{X}_0$, it holds that
	\begin{equation}\label{eq:cond_V_Jt}
		V^* - J(t) \leq M(N)K, \ \ \ J(t) := \sum_{j=0}^N\beta_jl(y_j(t),y^\mathrm{d}_j), \ \ t\in[0,1]
	\end{equation}
	for all $\theta'\in\mathcal{S}'(\bar{x})$ and all $(x_0^*,\theta^*)\in\mathcal{S}^*(\bar{x},\theta')$, where $y_j(t) = ty_j(\bar{x},\theta',U^\mathrm{d}) + (1-t)y_j(x_0^*,\theta^*,U^\mathrm{d})$, $j\in[0,N]$, $t\in[0,1]$.
\end{ass}

Before stating the main result of this section, we further discuss Assumption~\ref{ass:optimality} and provide some intuition on its interpretation and applicability.

\begin{rem}[Optimality]\label{rem:optimality}
	Assumption~\ref{ass:optimality} requires the difference of the unconstrained value function $V^*$ and the cost achieved by point-wise convex combinations of outputs generated by $(\bar{x},\theta')$ and the corresponding optimal pair $(x_0^*,\theta^*)$ to be uniformly bounded. Since the stage cost is nonnegative, a sufficient condition is $K_V:=\sup_{N\in\mathbb{N}}\,{V^*}/{M(N)} < \infty$, in which case \eqref{eq:cond_V_Jt} holds with $K=K_V$.
	This applies, for example, under exact realizability (for which \(V^*=0\)) and, for quadratic stage costs with weights satisfying an upper comparability condition as in~\eqref{eq:beta_upper_bound}, under finite-energy disturbances or exponentially decaying residuals.
	Such settings arise, in particular, for synthetic data and overparameterized deep learning models; see, e.g., \cite{Mhaskar2020}.
\end{rem}

\begin{rem}[Optimality under quadratic costs]\label{rem:optimality_quadratic}
	\, For the quadratic stage cost $l(y,y^{\mathrm d})=\|y-y^{\mathrm d}\|^2$, the optimality condition in Assumption~\ref{ass:optimality} admits a more explicit interpretation.
	To see this, let $r\equiv0$ for simplicity, assume that $M(N)>0$, and define the normalized weights $\omega_j:=\beta_j/M(N)$, $j\in[0,N]$.
	Fix some $\bar{x}\in\mathcal{X}_0$ and consider arbitrary $\theta'\in\mathcal{S}'(\bar{x})$ and $(x_0^*,\theta^*)\in\mathcal{S}^*(\bar{x},\theta')$, generating the outputs $y_j' = y_j(\bar{x},\theta',U^{\mathrm d})$ and $y_j^*=y_j(x_0^*,\theta^*,U^{\mathrm d})$, $j\in[0,N]$.
	Let $e_{j}:=y_j^*-y_j^{\mathrm d}$ and $\Delta y_{j}:=y_j'-y_j^*$ and define
	\begin{equation*}
		A_N:=\sum_{j=0}^N\omega_{j}\langle e_{j},\Delta y_{j}\rangle,
		\quad
		B_N:=\sum_{j=0}^N\omega_{j}\|\Delta y_{j}\|^2.
	\end{equation*}
	Then, the cost $J(t)$ in~\eqref{eq:cond_V_Jt} satisfies%
	\begin{equation*}
		J(t) = V^* + M(N)(2tA_N+t^2B_N),
		\quad t\in[0,1].
	\end{equation*}	
	Consequently,
	\begin{equation*}
		V^*-J(t) \leq V^*-\min_{t\in[0,1]}J(t) = M(N)\max_{t\in[0,1]} (-2tA_N-t^2B_N)
	\end{equation*}
	If $A_N\geq0$, the maximum equals zero.
	Suppose that $A_N<0$. Since $V'(\bar{x})=J(1)\geq V^*$, it follows that $2A_N+B_N\geq0$, and hence, $B_N\geq -2A_N>0$. The maximum is therefore attained at $t=-A_N/B_N\in(0,1]$ and yields $\max_{t\in[0,1]} \bigl(-2tA_N-t^2B_N\bigr) = A_N^2/B_N$, whenever $A_N<0$.
	Thus, Assumption~\ref{ass:optimality} becomes equivalent to the existence of a uniform constant $K\geq0$ such that ${A_N^2}/{B_N}\leq K$ whenever $A_N<0$.
	Since $B_N\geq-2A_N$ implies ${A_N^2}/{B_N}\leq -{A_N}/{2}$, a sufficient condition is that the negative parts $A_N^-:=\max\{-A_N,0\}$ are uniformly bounded. In particular, this is satisfied if $A_N^-\to0$ uniformly over the considered optimal solutions as $N\to\infty$.
	
	The quantity $A_N$ has a direct first-order interpretation, as $\left.\frac{\mathrm d}{\mathrm dt}\frac{J(t)-V^*}{M(N)}\right|_{t=0}=2A_N$.
	Thus, $A_N<0$ means that moving from the unconstrained toward the constrained optimal output sequence via the convex combinations considered in Assumption~\ref{ass:optimality} initially decreases the quadratic cost, whereas $A_N^-\to0$ means that this possible first-order decrease vanishes asymptotically.
	This is closely related to the classical first-order conditions for least-squares estimation: at an optimum, the residual is orthogonal to output variations induced by parameter perturbations---globally for linear models and locally, through linearization at the optimum, for nonlinear models; see, e.g.,~\cite{Bates1988,Ljung1999}.
	In particular, for affine least-squares problems with a convex feasible set, first-order optimality yields $A_N\geq0$ (and $A_N=0$ for an interior optimum); see \cite[Sec.~4.2.3]{Boyd2004}. For nonlinear models, such point-wise convex combinations of trajectories are generally not realizable by the model, and hence $A_N<0$ may occur despite optimality. In this case, uniform boundedness of \(A_N^-\) follows, e.g., under additional regularity conditions ensuring that the constrained and unconstrained SEM solutions remain uniformly close in the sense that \(B_N\) is uniformly bounded, since \(A_N^-\le B_N/2\); compare also \cite[Sec.~IV]{Ljung1978}.
\end{rem}

With these observations, we can now establish coercivity under Assumptions~\ref{ass:convex_lr} and~\ref{ass:optimality}.

\begin{thm}\label{thm:coerc}	
	Let Assumptions~\ref{ass:convex_lr} and~\ref{ass:optimality} be satisfied.
	Suppose that there exists a constant $\underline{\beta}>0$ such that%
	\begin{equation}\label{eq:beta_lower_bound}
		\beta_j \geq \underline{\beta}M(N), \quad j\in[0,N]
	\end{equation}
	uniformly for all $N\in\mathbb{N}$ with $M(N)$ from~\eqref{eq:B_def}.
	Then, the value function in~\eqref{eq:NLP_case_1} is coercive in the sense of Definition~\ref{def:coerc}.
\end{thm}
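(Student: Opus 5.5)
The plan is to apply the uniform strict convexity of Assumption~\ref{ass:convex_lr} pointwise along the convex combination $y_j(t)$ from Assumption~\ref{ass:optimality}, and then choose $t$ to extract a coercivity inequality. Fix $N\in\mathbb{N}$, $\bar{x}\in\mathcal{X}_0$, $\theta'\in\mathcal{S}'(\bar{x})$ and $(x_0^*,\theta^*)\in\mathcal{S}^*(\bar{x},\theta')$. Write $y_j'$, $y_j^*$ as in Definition~\ref{def:coerc}. Since $r\equiv0$, we have $V'(\bar{x})=\sum_j\beta_j l(y_j',y_j^\mathrm{d})$ and $V^*=\sum_j\beta_j l(y_j^*,y_j^\mathrm{d})$.

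Multiply the inequality of Assumption~\ref{ass:convex_lr} (with $y_1=y_j'$, $y_2=y_j^*$, $y^\mathrm{d}=y_j^\mathrm{d}$) by $\beta_j\geq0$ and sum over $j\in[0,N]$. This gives
\begin{equation*}
J(t)\leq tV'(\bar{x})+(1-t)V^*-t(1-t)\sum_{j=0}^N\beta_j\phi(\|y_j'-y_j^*\|).
\end{equation*}
Here one needs $y_j',y_j^*\in\mathcal{Y}$ by feasibility and $y_j^\mathrm{d}\in\mathcal{Y}^\mathrm{d}$. If Assumption~\ref{ass:convex_lr} requires $\mathcal{Y}$ convex for $y_j(t)$ to make sense, I note that $l$ is defined on all of $\mathbb{R}^p$, so only the inequality itself is needed.

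Rearranging and using Assumption~\ref{ass:optimality}, $V^*-J(t)\leq M(N)K$, gives
\begin{equation*}
t\bigl(V'(\bar{x})-V^*\bigr)\geq t(1-t)\sum_j\beta_j\phi(\|y_j'-y_j^*\|)-M(N)K.
\end{equation*}
Choose $t=1/2$ and apply the lower bound \eqref{eq:beta_lower_bound}, $\beta_j\geq\underline{\beta}M(N)$. This yields
\begin{equation*}
V'(\bar{x})\geq V^*+M(N)\Bigl(\tfrac{\underline{\beta}}{2}\sum_{j=0}^N\phi(\|y_j'-y_j^*\|)-2K\Bigr).
\end{equation*}
This is \eqref{eq:coerc} with $\alpha_\mathrm{c}=\phi$, $\gamma(s)=\underline{\beta}s/2\in\mathcal{K}_\infty$ and $C_\mathrm{c}=2K$, with all constants uniform in $N$, $\bar{x}$ and the chosen solutions.

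The main point needing care is the first step: justifying the pointwise use of Assumption~\ref{ass:convex_lr} and checking that the summed convex inequality is taken with the same weights $\beta_j$ that define $J(t)$. There is no real obstacle beyond this bookkeeping. The $t$-dependence is removed simply by fixing $t=1/2$, and the offset $K$ from Assumption~\ref{ass:optimality} is absorbed into $C_\mathrm{c}$.
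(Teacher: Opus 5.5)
Your proposal is correct and essentially matches the paper's proof: the paper also sums the weighted strict-convexity inequality along $y_j(t)$ and bounds $V^*-J(t)\leq M(N)K$ via Assumption~\ref{ass:optimality}. It then fixes an arbitrary $t\in(0,1)$, obtaining $\gamma(s)=\underline{\beta}(1-t)s$, $\alpha_\mathrm{c}=\phi$ and $C_\mathrm{c}=K/t$, of which your choice $t=1/2$ is a special case.
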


\begin{pf}
	Define ${l}_\Delta(y_1,y_2,y^\mathrm{d}):= l(y_1,y^\mathrm{d}) - l(y_2,y^\mathrm{d})$ for $y_1,y_2,y^\mathrm{d}\in\mathbb{R}^p$.
	Consider arbitrary $y_1,y_2\in\mathcal{Y}$ and $y^\mathrm{d}\in\mathcal{Y}^\mathrm{d}$ and introduce $y(t) := ty_1 + (1-t)y_2$, $t\in[0,1]$.
	From Assumption~\ref{ass:convex_lr}, we have
	\begin{align*}
		&l_\Delta(y(t),y_2,y^\mathrm{d}) = l(y(t),y^\mathrm{d}) - l(y_2,y^\mathrm{d}) \\
		&\leq tl(y_1,y^\mathrm{d})+(1-t)l(y_2,y^\mathrm{d}) \\
		&\quad - t(1-t)\phi(\|y_1-y_2\|) - l(y_2,y^\mathrm{d})\\
		&= tl(y_1,y^\mathrm{d})-tl(y_2,y^\mathrm{d}) - t(1-t)\phi(\|y_1-y_2\|)\\
		&= tl_\Delta(y_1,y_2,y^\mathrm{d}) - t(1-t)\phi(\|y_1-y_2\|).
	\end{align*}
	Rearranging and dividing by $t\in(0,1)$ yields
	\begin{equation*}
		l_\Delta(y_1,y_2,y^\mathrm{d})  \geq \frac{1}{t}l_\Delta(y(t),y_2,y^\mathrm{d}) + (1-t)\phi(\|y_1-y_2\|).
	\end{equation*}
	Using the definition of $l_\Delta$ leads to
	\begin{equation}
		\begin{split}\label{eq:convexity_res1}
		l(y_1,y^\mathrm{d}) - l(y_2,y^\mathrm{d})
		\geq &\ \frac{1}{t}l(y(t),y^\mathrm{d}) - \frac{1}{t}l(y_2,y^\mathrm{d}) \\
		&+(1-t)\phi(\|y_1-y_2\|).
		\end{split}
	\end{equation}

	Consider any $N\in\mathbb{N}$, $\bar{x}\in\mathcal{X}_0$, $\theta'\in\mathcal{S}'(\bar{x})$, and $(x_0^*,{\theta}^*)\in\mathcal{S}^*(\bar{x},\theta')$, generating the outputs $y'_j:=y_j(\bar{x},{\theta}',U^\mathrm{d})$ and ${y}_j^*:=y_j({x}^*_0,{\theta}^*,U^\mathrm{d})$, $j\in[0,N]$. Moreover, for each $j\in[0,N]$, let ${y}_j(t) := ty'_j + (1-t)y^*_j$, $t\in[0,1]$. From~\eqref{eq:convexity_res1} and Assumption~\ref{ass:optimality}, we can immediately conclude that
	\begin{equation}
		V'(\bar{x}) - V^* \geq -M(N)\frac{K}{t} + (1-t)\sum_{j=0}^N\beta_j\phi(\|y'_j-{y}^*_j\|)\label{eq:proof_convexity}
	\end{equation}
	with $t\in(0,1)$.
	Now arbitrarily fix $t\in(0,1)$ and let $c:=1-t\in(0,1)$ and $\bar{K}:=K/t\geq0$. Consequently, from~\eqref{eq:proof_convexity} and the lower bound in~\eqref{eq:beta_lower_bound}, we can infer that
	\begin{equation*}
		V'(\bar{x}) - V^* \geq -M(N)\bar{K} +  M(N)\underline{\beta}c\sum_{j=0}^N\phi(\|y'_j-{y}^*_j\|)\label{eq:proof_convexity_2}
	\end{equation*}
	uniformly for $N\in\mathbb{N}$, $\bar{x}\in\mathcal{X}_0$, $\theta'\in\mathcal{S}'(\bar{x})$, and $(x_0^*,{\theta}^*)\in\mathcal{S}^*(\bar{x},\theta')$.
	Setting $\gamma(s) = \underline{\beta}c\cdot s$, $\alpha_\mathrm{c} = \phi$, and $C_\mathrm{c}= \bar{K}$ establishes the coercivity property from Definition~\ref{def:coerc}, thus completing the proof.
\end{pf}

\begin{rem}[Uniform weight comparability]\label{rem:beta}
	The bounds in~\eqref{eq:beta_upper_bound} and \eqref{eq:beta_lower_bound} ensure that the weights $\beta_j$ do not become negligible or dominant relative to their average.
	This is satisfied for standard system identification settings, e.g., when using an unweighted cost function (where $\beta_j\equiv1$) or a normalized one (where $\beta_j\equiv 1/(N+1)$).
	In both cases, conditions~\eqref{eq:beta_upper_bound} and \eqref{eq:beta_lower_bound} hold with $\underline{\beta}=\overline{\beta}=1$.
	In contrast, the bounds in~\eqref{eq:beta_upper_bound} and \eqref{eq:beta_lower_bound} cannot, in general, be satisfied uniformly in $N$ for, e.g., an exponentially discounted cost using $\beta_j=\rho^{N-j}$, $j\in[0,N]$ for some discount factor $\rho\in(0,1)$.
\end{rem}

\begin{rem}[Incorporating zero weights]\label{rem:burn_in}
	Theorem~\ref{thm:coerc} can be easily extended to the case where the weightings $\beta_j$ are selected such that $\beta_j=0$ for a subset of indices $j\in[0,N]$. This allows one to, e.g., reject certain outliers in the cost function or to include a burn-in phase in order to reduce the influence of model transients; compare \cite{Schiller2025b,Jaeger2002,Bonassi2022} and see the numerical example in Section~\ref{sec:example}.
	Specifically, when imposing \eqref{eq:beta_lower_bound} only for $j\in\mathcal{I}:=\{j\in[0,N] \mid \beta_j>0\}$, it suffices that the cardinality of the zero-weight index set $\mathcal{G} := [0,N]\setminus \mathcal{I}$ remains uniformly bounded with respect to $N$, i.e., $\sup_{N\in\mathbb N}\#\mathcal{G}<\infty$.
	This ensures that only a uniformly bounded number of weights may vanish, so that the cost function does not ignore a growing portion of the trajectory when $N$ is increasing.
	Then, under compactness of $\mathcal{Y}$, one can suitably modify the proof of Theorem~\ref{thm:coerc} (specifically, \eqref{eq:proof_convexity}) and establish the coercivity property from Definition~\ref{def:coerc}, using the general fact that, for all $N\in\mathbb{N}$,
	\begin{equation*}
		\sum_{j\in\mathcal{I}} a_j=  \sum_{j=0}^N a_j - \sum_{j\in\mathcal{G}} a_j\geq \sum_{j=0}^N a_j -\#\mathcal{G}\cdot \max_{j\in\mathcal{G}}a_j
	\end{equation*}
	for summands $0\leq a_j<\infty$, $j\in[0,N]$.
\end{rem}

\section{Numerical example}\label{sec:example}

To illustrate the theory, we consider the scalar nonlinear state-space model
\begin{equation}
	x^+ = \tanh(\theta x + u), \qquad y=x, \label{eq:ex_sys}
\end{equation}
where $\theta$ is the parameter to be identified.
The model~\eqref{eq:ex_sys} can equivalently be interpreted as a scalar Elman-type recurrent neural network with one hidden unit, recurrent weight $\theta$, fixed input weight equal to one, zero bias, and an identity readout map. It therefore provides a simple representative of the recurrent neural network models motivating the use of truncated SEM, while retaining sufficient analytical tractability to explicitly characterize the relevant optimal solutions. For larger benchmark examples illustrating in particular the role of a burn-in phase in recurrent neural network training using the SEM method, the reader is referred to~\cite{Schiller2025b}.

We consider the compact sets $\mathcal{X}_0 = [-1,1]$ and $\Theta = [-0.9,0.9]$ and generate a training dataset $D$ as in~\eqref{eq:Di} by simulating the model~\eqref{eq:ex_sys} using the true unknown parameter $\theta^\mathrm{true} = -0.4$, initial condition $x_0^\mathrm{true}=0.6$, and a sinusoidal input trajectory $U^\mathrm{d}$.
The absence of measurement noise is deliberate: it isolates the effect caused by fixing the initial condition and permits a direct verification of the optimality condition in Assumption~\ref{ass:optimality}.

We use the quadratic stage cost $l(y_1,y_2) = (y_1 - y_2)^2$ with $r \equiv 0$ and incorporate the following weighting scheme\footnote{
	For the considered weighting scheme, we in fact obtain $M(N)=0$ for $N\in\{0,1\}$ and $M(N)>0$ for all $N\geq2$. We therefore restrict the following analysis, where necessary, to $N\geq2$. The degenerate cases $N\in\{0,1\}$, for which $J\equiv0$, can be treated separately and do not affect the turnpike conclusions.
} corresponding to a fixed burn-in phase of length $m=2$, i.e., $\beta_j = 0$, $j\in\{0,1\}$ and $\beta_j = 1$, $j\in[2,N]$, compare Remark~\ref{rem:burn_in}.
The initial state of the constrained SEM problem is fixed to $\bar{x}=-0.6$.

\paragraph{Verification of the theoretical conditions.}
The model in~\eqref{eq:ex_sys} is uniformly incrementally output stable over $\Theta$. Indeed, using the fact that $\tanh$ is globally Lipschitz continuous with Lipschitz constant one, we obtain
\begin{equation*}
	|y_j(x_0,\theta,U^{\mathrm d}) - y_j(\hat{x}_0,\theta,U^{\mathrm d})|
	\leq |\theta|^j|x_0-\hat{x}_0| \leq 0.9^j|x_0-\hat{x}_0|
\end{equation*}
for all $j\in\mathbb{N}$, all $x_0,\hat{x}_0\in\mathcal{X}_0$, and all $\theta\in\Theta$.
Hence, Assumption~\ref{ass:output_stability} holds with $C=1$ and $\lambda=0.9$.

Furthermore, $x_j\in(-1,1)$ for all $j\geq1$, independently of $x_0\in\mathcal{X}_0$.
The model outputs $y_j$ therefore remain in the compact set $\mathcal{Y}:=\mathcal{X}:=\mathcal{X}_0$ for all $j\in\mathbb{N}$. Consequently, every pair $(x_0,\theta)\in\mathcal{X}_0\times\Theta$ is feasible in~\eqref{eq:NLP}. Moreover, the stage cost satisfies Assumptions~\ref{ass:Lipschitz} and~\ref{ass:convex_lr}, with $\phi(s)=s^2$.
For $j\in[2,N]$, the weights $\beta_j$ are uniformly comparable to $M(N)$ and satisfy $M(N)\leq\beta_j\leq3M(N)$, while the burn-in phase yields $\beta_j=0$ for $j\in\{0,1\}$. Hence, \eqref{eq:beta_upper_bound} holds with $\overline{\beta}=3$, while the extension of \eqref{eq:beta_lower_bound} in Remark~\ref{rem:burn_in} applies with $\underline{\beta}=1$, due to the fact that the resulting zero-weight index set satisfies $\mathcal{G}=\{0,1\}$ and hence $\#\mathcal{G}=2$ uniformly for all $N\geq2$.
Since the dataset $D$ is generated by an admissible pair, we have $J(x_0^{\mathrm{true}},\theta^{\mathrm{true}}) = 0$ for every integer $N\geq2$. Non-negativity of the cost therefore implies that $V^*=0$ and, according to Remark~\ref{rem:optimality}, Assumption~\ref{ass:optimality} holds with $K=0$.
Consequently, Proposition~\ref{prop:reachability} and Theorem~\ref{thm:coerc} establish cost reachability and coercivity, respectively. Application of Theorem~\ref{thm:equivalences} then implies that the constrained SEM problem~\eqref{eq:NLP_case_1} is strictly dissipative and has the cumulative turnpike property.

\paragraph{Non-unique optimal solutions for $N{\,=\,}m$.}
We first consider the special case where $N=m=2$, for which only the terminal output at $j=N$ is penalized. Hence, the cost function~\eqref{eq:NLP_cost} reduces to $J(x_0,\theta) =(y_2(x_0,\theta,U^{\mathrm d})-y_2^{\mathrm d})^2$, where $y_2(x_0,\theta,U^{\mathrm d}) = \tanh(\theta \tanh(\theta x_0 + u^\mathrm{d}_0) + u^\mathrm{d}_1)$ using the model equations from~\eqref{eq:ex_sys}.
Since $V^*=0$, any solution $(x_0^*,\theta^*)$ must yield $y_2(x_0^*,\theta^*,U^\mathrm{d}) = y^\mathrm{d}_2$. Thus, for a fixed $\theta\in\Theta\setminus\{0\}$, this is satisfied for
\begin{equation*}
	x_0^*(\theta) := \frac{\artanh(x_1^*(\theta))-u^\mathrm{d}_0}{\theta},
	\quad
	x_1^*(\theta) := \frac{\artanh(y^\mathrm{d}_2)-u^\mathrm{d}_1}{\theta}.
\end{equation*}
Consequently, the unconstrained SEM problem~\eqref{eq:NLP_case_TP} possesses the non-singleton solution set
\begin{equation*}
	\mathcal{S}^*{=}\left\{(x_0^*(\theta),\theta) \mid \theta{\,\in\,}\Theta{\,\setminus}\{0\}, x_1^*(\theta){\,\in\,}(-1,1), x_0^*(\theta){\,\in\,}\mathcal{X}_0\right\}\hspace{-0.4ex}.
\end{equation*}
This set forms a nonlinear curve in the $(\theta,x_0)$-plane, which is illustrated by the solid lines in the middle panel in Figure~\ref{fig:example}.

We now consider the constrained SEM problem in~\eqref{eq:NLP_case_1} with fixed initial condition $\bar{x}=-0.6$. The left panel in Figure~\ref{fig:example} shows the cost function $J(\bar{x},\theta)$ for admissible $\theta\in\Theta$, from which we find that $\mathcal{S}'(\bar{x}) = \{\theta'\}$ with $\theta' =  0.152$.
Since the distance measure $D$ in~\eqref{eq:J_tilde} only contains the weighted terminal output for $N=m=2$ and every pair in $\mathcal{S}^*$ generates the same terminal output $y_2^{\mathrm d}$, all elements of $\mathcal{S}^*$ are equally close to the constrained solution. Thus, $\mathcal{S}^*(\bar{x},\theta')=\mathcal{S}^*$.
The example therefore explicitly exhibits the non-unique turnpike setting addressed by Definition~\ref{def:intTP}.

\paragraph{Increasing the dataset size.}
We now increase the dataset size and additionally consider $N\in\{20,200,2{,}000,20{,}000\}$.
For every $N>m$, the true pair $(x_0^{\mathrm{true}},\theta^{\mathrm{true}})$ remains an optimal solution with zero cost---in fact, it is unique. Indeed, every optimal pair necessarily satisfies $y_2=y_2^{\mathrm d}$ and $y_3=y_3^{\mathrm d}$.
Using the model dynamics~\eqref{eq:ex_sys} and injectivity of $\tanh$, it follows that, provided $y_2^{\mathrm d}\neq0$, $\theta=({\artanh(y_3^{\mathrm d})-u_2^{\mathrm d}})/{y_2^{\mathrm d}} = \theta^{\mathrm{true}}\neq0$.
The initial condition is then uniquely recovered by backward recursion, yielding $x_0=x_0^{\mathrm{true}}$.
Note that both $y_2^{\mathrm d}\neq0$ and $\theta^{\mathrm{true}}\neq0$ hold for the considered dataset.
Consequently, $\mathcal{S}^*=\{(x_0^{\mathrm{true}},\theta^{\mathrm{true}})\}$ for each $N\geq3$, depicted by the blue $\times$-marker in the middle plot of Figure~\ref{fig:example}.

\begin{table}[t]
	\centering
	\caption{Constrained SEM solutions, scaled value-function gaps, and cumulative output discrepancies for varying dataset sizes.}
	\label{tab:example_results}
	\begin{tabular*}{\tblwidth}{@{}LCCC@{}}
		\toprule
		$N$
		& $\theta'$
		& $\frac{V'(\bar{x})-V^*}{M(N)}\cdot10^2$
		& $E_N$\\
		\midrule
		2 & \phantom{-}0.152 & 0.003 & 2.587 \\
		20 & -0.322 & 2.464 & 1.641 \\
		200 & -0.383 & 3.447 & 1.675 \\
		2,000 & -0.398 & 3.665 & 1.684 \\
		20,000 & -0.400 & 3.691 & 1.685 \\
		\bottomrule
	\end{tabular*}
\end{table}

\begin{figure*}
	\vspace{1.2ex}
	\includegraphics{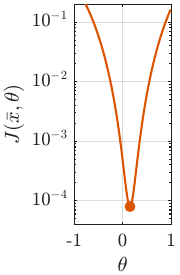}\hfill
	\includegraphics{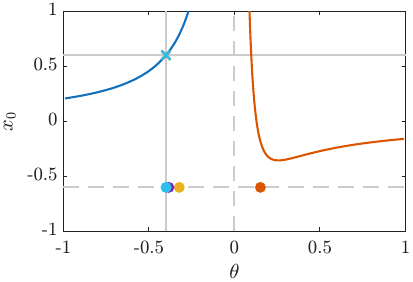}\hfill
	\includegraphics{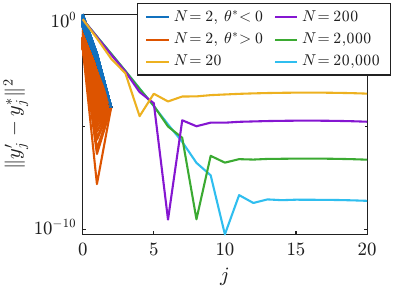}
	\caption{
		Left: Cost function $J(\bar{x},\theta)$ evaluated over $\theta{\,\in\,}\Theta$ using $N{\,=\,}m{\,=\,}2$.
		Middle: Optimal solution sets $\mathcal{S}^*$ (solid curves and $\times$-marker) and constrained solutions $\mathcal{S}'(\bar{x})$ (dot-markers) in the $(\theta,x_0)$-plane for different dataset sizes $N$; the colors correspond to those in the right panel.
		Right: Squared output discrepancies over time. For $N{\,=\,}2$, each curve corresponds to an optimal pair $(x_0^*,\theta^*)$ sampled from the non-singleton solution set $S^*$.
	}
	\label{fig:example}
\end{figure*}

For each considered dataset size $N$, we solve the constrained SEM problem~\eqref{eq:NLP_case_1} numerically in MATLAB using the \texttt{fmincon} routine. The obtained constrained solutions are unique and are denoted by $\mathcal{S}'(\bar{x})=\{\theta'\}$.
The corresponding constrained parameter estimates $\theta'$, scaled value-function gaps $({V'(\bar{x})-V^*})/{M(N)}$, and cumulative output discrepancies $E_N$ are reported in Table~\ref{tab:example_results}, where\footnote{\label{footnote:EN}
	For $N=2$, all elements of $\mathcal S^*$ are equally close to the constrained solution, so that $\mathcal S^*(\bar x,\theta')=\mathcal S^*$. Since Definition~\ref{def:intTP} requires the cumulative bound to hold for all associated closest optimal solutions, $E_N$ denotes the corresponding worst-case left-hand side of~\eqref{eq:intTP}. For $N\geq3$, the set $\mathcal S^*$ reduces to a singleton.
	We deliberately consider squared discrepancies, as, for the present setup, the theory establishes the cumulative turnpike property with \(\alpha(s)=s^2\): The proof of Theorem~\ref{thm:coerc} yields $\alpha_c(s)=\phi(s)=s^2$ in~\eqref{eq:coerc}, while the proof of Lemma~\ref{lem:coerc2intTP} yields $\alpha=\alpha_c$ in~\eqref{eq:intTP}.
}
\begin{equation*}
	E_N{:=}\max_{(x_0^*,\theta^*)\in\mathcal{S}^*(\bar{x},\theta')} \ \sum_{j=0}^N \big(y_j(\bar{x},\theta',U^{\mathrm d}) - y_j(x_0^*, \theta^*, U^{\mathrm d})\big)^2.
\end{equation*}
The associated solutions are additionally depicted in the middle panel of Figure~\ref{fig:example}.
From these results, we observe that the estimates $\theta'$ approach the true parameter $\theta^{\mathrm{true}}=-0.4$ as the dataset size increases.
Moreover, the scaled value-function gap remains bounded over the considered dataset sizes and appears to converge to a finite value. Its boundedness is consistent with the cost reachability property in Assumption~\ref{ass:reachability}.
Most importantly, the cumulative output discrepancy $E_N$ remains uniformly bounded despite the increasing dataset size and approaches approximately $1.685$ for the largest considered size.
This boundedness is precisely the behavior captured by the cumulative turnpike property in Definition~\ref{def:intTP}: for the present setup, $E_N$ coincides with the worst-case value of the left-hand side in~\eqref{eq:intTP} for the function $\alpha(s)=s^2$ established by the theory; see Footnote~\ref{footnote:EN} for details. Hence, Theorem~\ref{thm:equivalences} directly guarantees the observed uniform boundedness of $E_N$.

Figure~\ref{fig:example} provides a graphical illustration of these results.
The right panel shows the squared output discrepancies over time for the different dataset sizes.
The discrepancies are predominantly confined to an initial portion of the trajectory and remain small thereafter, progressively becoming smaller for increasing values of $N$. Together with the boundedness properties reported in Table~\ref{tab:example_results}, this behavior is consistent with the cumulative turnpike property in Definition~\ref{def:intTP}.

\section{Conclusion}
In this paper, we studied turnpike behavior in SEM problems for system identification.
Such behavior is desirable in practice, as it allows one to employ computationally more tractable constrained SEM problems with a fixed initial state while ensuring that the resulting output sequence remains close to an optimal output sequence of the corresponding unconstrained problem.

We generalized the notion of cumulative output turnpike to the case of non-unique optimal output sequences and established its equivalence to a coercivity property of the value function and a tailored notion of strict dissipativity. Moreover, we introduced a cardinality turnpike property and showed that it is strictly weaker than the cumulative notion, while equivalence can be recovered under additional uniform boundedness assumptions.
We further derived sufficient conditions for turnpike behavior in system identification based on incremental output stability, convexity of the stage cost, and a suitable optimality condition.
Finally, we considered a simple Elman-type recurrent neural network for which these conditions can be verified analytically and illustrated the resulting turnpike behavior numerically.

{Future work might include extending the analysis to more general SEM formulations augmented with additional encoders and investigating relaxed conditions under which turnpike behavior can be rigorously guaranteed.}

%%%%%%%%%%%%%%%%%%%%%%%%%%%%%%%%%%%%%%%%%%%%%%%%%%%%%%%%%%%%%%%%%%%%%%%%%%%%%%%%%
\appendix

\section{Proof of Theorem~\ref{thm:equivalences}}
\label{sec:thm_equivalences}

We prove Theorem~\ref{thm:equivalences} by establishing the implications
\begin{itemize}
	\itemsep=-0.4ex
	\item coercivity $\Rightarrow$ cumulative turnpike (Lemma~\ref{lem:coerc2intTP}),
	\item cumulative turnpike $\Rightarrow$ strict dissipativity (Lemma~\ref{lem:intTP2diss}),
	\item strict dissipativity $\Rightarrow$ coercivity (Lemma~\ref{lem:diss2coerc}).
\end{itemize}

\begin{lem}[Coercivity implies cumulative turnpike]\label{lem:coerc2intTP}
	Let Assumption~\ref{ass:reachability} hold. If the value function in~\eqref{eq:NLP_case_1} is coercive (Definition~\ref{def:coerc}), then the problem~\eqref{eq:NLP_case_1} has the cumulative turnpike property (Definition~\ref{def:intTP}).
\end{lem}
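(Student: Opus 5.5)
The plan is to chain the coercivity inequality \eqref{eq:coerc} with the cost-reachability bound \eqref{eq:reachability}. Since $M(N)>0$, the factor $M(N)$ that appears in both can be cancelled, and inverting $\gamma$ then gives the cumulative bound.

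Fix $N\in\mathbb{N}$, $\bar{x}\in\mathcal{X}_0$, $\theta'\in\mathcal{S}'(\bar{x})$ and $(x_0^*,\theta^*)\in\mathcal{S}^*(\bar{x},\theta')$, and write $y_j'$, $y_j^*$ as in Definition~\ref{def:coerc}.

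\textbf{Step 1.} Coercivity gives
\begin{equation*}
M(N)\Bigl(\gamma\Bigl(\sum_{j=0}^{N}\alpha_\mathrm{c}(\|y_j'-y_j^*\|)\Bigr)-C_\mathrm{c}\Bigr)\leq V'(\bar{x})-V^*.
\end{equation*}

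\textbf{Step 2.} Assumption~\ref{ass:reachability} bounds the right-hand side by $M(N)E_\mathrm{r}$. Dividing by $M(N)>0$ yields
\begin{equation*}
\gamma\Bigl(\sum_{j=0}^{N}\alpha_\mathrm{c}(\|y_j'-y_j^*\|)\Bigr)\leq E_\mathrm{r}+C_\mathrm{c}.
\end{equation*}

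\textbf{Step 3.} Since $\gamma\in\mathcal{K}_\infty$, it is a bijection of $\mathbb{R}_{\geq0}$ with a strictly increasing inverse. Applying $\gamma^{-1}$ gives
\begin{equation*}
\sum_{j=0}^{N}\alpha_\mathrm{c}(\|y_j'-y_j^*\|)\leq\gamma^{-1}(E_\mathrm{r}+C_\mathrm{c}).
\end{equation*}

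\textbf{Step 4.} Set $\alpha:=\alpha_\mathrm{c}$ and $E:=\max\{\gamma^{-1}(E_\mathrm{r}+C_\mathrm{c}),1\}$. Taking the maximum with $1$ ensures $E>0$, as Definition~\ref{def:intTP} demands strict positivity, while $\gamma^{-1}(E_\mathrm{r}+C_\mathrm{c})$ could be zero.

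The constants $E_\mathrm{r}$ and $C_\mathrm{c}$ and the functions $\gamma$ and $\alpha_\mathrm{c}$ are uniform in $N$, $\bar{x}$, and the chosen optimal solutions. Hence $E$ is uniform as well, and \eqref{eq:intTP} holds, which is the cumulative turnpike property.

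There is no real obstacle in this argument. The only points requiring care are the positivity of $M(N)$, which is a standing assumption and justifies the division in Step 2, and the requirement $E>0$, which Step 4 handles. The whole mechanism rests on the fact that the quantification "for all $\theta'$ and all closest optimal pairs" is identical in Definition~\ref{def:coerc} and Definition~\ref{def:intTP}, so no selection argument over $\mathcal{S}^*(\bar{x},\theta')$ is needed.
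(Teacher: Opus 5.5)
Your proof is correct and follows the paper's argument essentially verbatim: combine \eqref{eq:coerc} with \eqref{eq:reachability}, divide by $M(N)>0$, apply $\gamma^{-1}$, and take $\alpha=\alpha_\mathrm{c}$ and $E=\gamma^{-1}(E_\mathrm{r}+C_\mathrm{c})$. Your extra $\max\{\cdot,1\}$ is a small but legitimate refinement: the paper's choice of $E$ vanishes if $E_\mathrm{r}+C_\mathrm{c}=0$, whereas Definition~\ref{def:intTP} requires $E>0$.
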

\begin{pf}
	Assuming that $M(N)>0$ for all $N\in\mathbb{N}$, the combination of Definition~\ref{def:coerc} and Assumption~\ref{ass:reachability} leads to
	\begin{equation}\label{eq:proof_cumulative}
		\gamma\left(\sum_{j=0}^{N} \alpha_\mathrm{c}(\|y'_j-y^*_j\|)\right) - C_\mathrm{c} \leq \frac{V'(\bar{x}) - V^*}{M(N)} \leq E_\mathrm{r}
	\end{equation}
	for any $N\in\mathbb{N}$, $\bar{x}\in\mathcal{X}_0$, $\theta'\in\mathcal{S}'(\bar{x})$, and $(x_0^*,\theta^*)\in\mathcal{S}^*(\bar{x},\theta')$, where $y_j'=y_j(\bar{x},\theta',U^\mathrm{d})$ and $y^*_j=y_j(x_0^*,\theta^*,U^\mathrm{d})$, $j\in[0,N]$.
	Adding $C_\mathrm{c}$ and applying $\gamma^{-1}\in\mathcal{K}_\infty$ to both sides of \eqref{eq:proof_cumulative} and choosing $\alpha=\alpha_\mathrm{c}$ and $E = \gamma^{-1}(E_\mathrm{r}+C_\mathrm{c})$ establishes \eqref{eq:intTP}, thus completing the proof.
\end{pf}

\begin{lem}[Cumulative turnpike implies dissipativity]\label{lem:intTP2diss}
	If the problem~\eqref{eq:NLP_case_1} has the cumulative turnpike property (Definition~\ref{def:intTP}), then it is strictly dissipative (Definition~\ref{def:diss}).
\end{lem}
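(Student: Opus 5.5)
The plan is to build the storage function explicitly as a "tail sum" that makes the dissipation inequality \eqref{eq:diss_dissip} hold with equality. The boundary condition $\lambda_0\leq M(N)E_\lambda$ should then follow directly from the cumulative turnpike bound \eqref{eq:intTP}. Assumption~\ref{ass:reachability} should not be needed for this direction.

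\emph{Construction.} Let $\alpha\in\mathcal{K}_\infty$ and $E>0$ be as in Definition~\ref{def:intTP}, and fix $N\in\mathbb{N}$. For any pair $(x_0,\theta)\in\mathcal{X}_0\times\Theta$ and any $(x_0^*,\theta^*)\in\mathcal{S}^*(x_0,\theta)$, write $d_i:=\|y_i(x_0,\theta,U^\mathrm{d})-y_i(x_0^*,\theta^*,U^\mathrm{d})\|$. For $j\in[0,N+1]$ define
\begin{equation*}
\lambda_j((x_0,\theta),(x_0^*,\theta^*)):=\sum_{i=j}^{N}\bigl(M(N)\alpha(d_i)-s_i((x_0,\theta),(x_0^*,\theta^*))\bigr),
\end{equation*}
with $s_i$ from \eqref{eq:diss_supply}. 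On all remaining arguments, set $\lambda_j:=0$; only pairs with $(x_0^*,\theta^*)\in\mathcal{S}^*(x_0,\theta)$ enter Definition~\ref{def:diss}, so this choice is harmless.

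\emph{Verification.} I will check the three requirements in order.
\begin{enumerate}[a)]
\item The sum defining $\lambda_{N+1}$ is empty, so $\lambda_{N+1}=0$.
\item Telescoping gives $\lambda_{j+1}-\lambda_j=s_j-M(N)\alpha(d_j)$. Hence \eqref{eq:diss_dissip} holds with equality for all $j\in[0,N]$, with $\alpha_\lambda:=\alpha$.
\item For the initial value, take $\bar{x}\in\mathcal{X}_0$, $\theta'\in\mathcal{S}'(\bar{x})$ and $(x_0^*,\theta^*)\in\mathcal{S}^*(\bar{x},\theta')$. By \eqref{eq:supply_value},
\begin{equation*}
\lambda_0=M(N)\sum_{i=0}^N\alpha(d_i)-\bigl(J(\bar{x},\theta')-V^*\bigr)=M(N)\sum_{i=0}^N\alpha(d_i)-\bigl(V'(\bar{x})-V^*\bigr).
\end{equation*}
Optimality gives $V'(\bar{x})\geq V^*$, because any constrained feasible pair is feasible in \eqref{eq:NLP_case_TP}. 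Combining this with \eqref{eq:intTP} yields $\lambda_0\leq M(N)E$. We can therefore take $E_\lambda:=E$.
\end{enumerate}
Both constants are uniform in $N$, in $\bar{x}$ and in the considered solutions, as Definition~\ref{def:diss} requires.

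\emph{Main obstacle.} I expect no genuine technical difficulty. The only delicate points are formal. First, the storage function must be well defined as a function on $[0,N+1]\times\mathcal{X}_0\times\Theta\times\mathcal{X}_0\times\Theta$; the extension by zero handles this. Second, the supply identity \eqref{eq:supply_value} must be used with $(x_0^*,\theta^*)\in\mathcal{S}^*\subseteq$ optimal pairs, so that its right-hand side is exactly $V'(\bar{x})-V^*\geq 0$. It is this sign that allows the cost term to be dropped in the bound on $\lambda_0$.
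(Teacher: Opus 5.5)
Your proposal is correct and is essentially the paper's proof: the same tail-sum storage function $\lambda_j=M(N)\sum_{t=j}^N\alpha(\|y'_t-y^*_t\|)-\sum_{t=j}^N s_t$, telescoping to get \eqref{eq:diss_dissip} with equality and $\alpha_\lambda=\alpha$, and the bound $\lambda_0\le M(N)E$ from $V'(\bar{x})\ge V^*$ together with \eqref{eq:intTP}, so that $E_\lambda=E$. Your extension by zero to the remaining arguments is a harmless formal detail that the paper leaves implicit.
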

\begin{pf}
	Consider $N\in\mathbb{N}$, $\bar{x}\in\mathcal{X}_0$, $\theta'\in\mathcal{S}'(\bar{x})$, and $(x_0^*,\theta^*)\in\mathcal{S}^*(\bar{x},\theta')$. We consider the following candidate storage function:
	\begin{align}
		\lambda^\mathrm{c}_j:=M(N)\sum_{t=j}^N\alpha(\|y'_t-y_t^*\|)-\sum_{t=j}^Ns_t \label{eq:proof_candidate_storage}
	\end{align}
	with $\alpha$ from Definition~\ref{def:intTP} and where $y'_j=y_j(\bar{x},\theta',U^\mathrm{d})$, $y^*_j=y_j(x_0^*,\theta^*,U^\mathrm{d})$, and $s_j = s_j((\bar{x},\theta'),(x_0^*,\theta^*))$, $j\in[0,N]$.
	We first establish the dissipation inequality in~\eqref{eq:diss_dissip}.
	To this end, we evaluate $\lambda^\mathrm{c}_j$ at the successor point $j+1$:
	\begin{align*}
		\lambda^\mathrm{c}_{j+1} &= M(N)\sum_{t=j+1}^N\alpha(\|y'_t-y^*_t\|)-\sum_{t=j+1}^Ns_t\\
		&= M(N)\sum_{t=j}^N\alpha(\|y'_t-y^*_t\|)-\sum_{t=j}^Ns_t\\ &\qquad -(M(N)\alpha(\|y'_j-y^*_j\|)-s_j)\\
		&= \lambda^\mathrm{c}_j -M(N)\alpha(\|y'_j-y^*_j\|)+s_j,
	\end{align*}
	where the last step follows from \eqref{eq:proof_candidate_storage}.
	Hence, $\lambda^\mathrm{c}_j$ satisfies~\eqref{eq:diss_dissip} for all $j\in[0,N]$ with $\alpha_\lambda=\alpha$.
	
	It remains to verify the initial and terminal conditions on the storage function $\lambda^\mathrm{c}_j$. First, \eqref{eq:proof_candidate_storage} directly implies $\lambda^\mathrm{c}_{N+1}=0$. For $j=0$, using \eqref{eq:supply_value} and the definition of the value function $V'(\bar{x})$ from \eqref{eq:NLP_case_1}, \eqref{eq:proof_candidate_storage} evaluates to
	\begin{equation*}
		\lambda^\mathrm{c}_0
		= M(N) \sum_{t=0}^N\alpha(\|y'_t-y^*_t\|)-(V'(\bar{x})-V^*).
	\end{equation*}
	Using the cumulative turnpike property (Definition~\ref{def:intTP}) and the fact that $V^*\leq V'(\bar{x})$ by optimality, we obtain that
	\begin{equation*}
		\lambda^\mathrm{c}_0\leq M(N)\sum_{t=0}^N\alpha(\|y'_t-y^*_t\|) \leq M(N)\cdot E.
	\end{equation*}
	Letting $E_\lambda=E$ verifies the desired uniform upper bound as stated in Definition~\ref{def:diss} and thus completes the proof.
\end{pf}

\begin{lem}[Strict dissipativity implies coercivity]\label{lem:diss2coerc}
	If the SEM problem in~\eqref{eq:NLP_case_1} is strictly dissipative (Definition~\ref{def:diss}), then the value function in~\eqref{eq:NLP_case_1} is coercive (Definition~\ref{def:coerc}).
\end{lem}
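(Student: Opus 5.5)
The plan is to telescope the dissipation inequality~\eqref{eq:diss_dissip} over $j\in[0,N]$ and combine it with the supply-rate identity~\eqref{eq:supply_value}. Fix $N\in\mathbb{N}$, $\bar{x}\in\mathcal{X}_0$, $\theta'\in\mathcal{S}'(\bar{x})$, and $(x_0^*,\theta^*)\in\mathcal{S}^*(\bar{x},\theta')$. Abbreviate $\lambda_j:=\lambda_j((\bar{x},\theta'),(x_0^*,\theta^*))$, $s_j:=s_j((\bar{x},\theta'),(x_0^*,\theta^*))$, $y_j'=y_j(\bar{x},\theta',U^\mathrm{d})$ and $y_j^*=y_j(x_0^*,\theta^*,U^\mathrm{d})$.

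Summing~\eqref{eq:diss_dissip} from $j=0$ to $N$ makes the left-hand side telescope to $\lambda_{N+1}-\lambda_0=-\lambda_0$, using $\lambda_{N+1}=0$. This gives
\begin{equation*}
	-\lambda_0 \leq \sum_{j=0}^N s_j - M(N)\sum_{j=0}^N\alpha_\lambda(\|y'_j-y^*_j\|).
\end{equation*}
By~\eqref{eq:supply_value}, applied with $(x_0,\theta)=(\bar{x},\theta')$, we have $\sum_{j=0}^N s_j=J(\bar{x},\theta')-V^*=V'(\bar{x})-V^*$, since $\theta'$ is optimal for~\eqref{eq:NLP_case_1}. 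The bound $\lambda_0\leq M(N)E_\lambda$ then yields
\begin{equation*}
	V'(\bar{x}) \geq V^* + M(N)\Big(\sum_{j=0}^N\alpha_\lambda(\|y'_j-y^*_j\|) - E_\lambda\Big).
\end{equation*}

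This is exactly~\eqref{eq:coerc} with $\gamma(s)=s$ (which is of class $\mathcal{K}_\infty$), $\alpha_\mathrm{c}=\alpha_\lambda$, and $C_\mathrm{c}=E_\lambda$. These constants are uniform in $N$, $\bar{x}$, and the chosen solutions, because $\alpha_\lambda$ and $E_\lambda$ are uniform by Definition~\ref{def:diss}.

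There is no real obstacle here. The one point to check is that the dissipation inequality and the bound on $\lambda_0$ hold for exactly the same triple $(\bar{x},\theta',(x_0^*,\theta^*))$ that appears in Definition~\ref{def:coerc}, and they do, because both definitions quantify over all $\theta'\in\mathcal{S}'(\bar{x})$ and all $(x_0^*,\theta^*)\in\mathcal{S}^*(\bar{x},\theta')$. Note also that no reachability assumption is needed for this implication.
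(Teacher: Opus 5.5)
Your proposal is correct and follows essentially the same route as the paper. Both telescope the dissipation inequality over $j\in[0,N]$, apply \eqref{eq:supply_value} together with $\lambda_{N+1}=0$ and $\lambda_0\leq M(N)E_\lambda$, and conclude coercivity with $\gamma(s)=s$, $\alpha_\mathrm{c}=\alpha_\lambda$, and $C_\mathrm{c}=E_\lambda$.
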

\begin{pf}
	From the dissipation inequality in~\eqref{eq:diss_dissip}, we have
	\begin{equation*}
		\begin{split}
			&\lambda_{N+1}((\bar{x},\theta'),(x^*_0,\theta^*)) -  \lambda_0((\bar{x},\theta'),(x^*_0,\theta^*)) \nonumber \\
			&\leq \sum_{j=0}^{N}s_j -  M(N)\sum_{j=0}^{N}\alpha_\lambda(\|y_j'-y_j^*\|)
		\end{split}
	\end{equation*}
	for any $N\in\mathbb{N}$, $\bar{x}\in\mathcal{X}_0$, $\theta'\in\mathcal{S}'(\bar{x})$, and $(x_0^*,\theta^*)\in\mathcal{S}^*(\bar{x},\theta')$, where $y'_j=y_j(\bar{x},\theta',U^\mathrm{d})$, $y^*_j=y_j(x_0^*,\theta^*,U^\mathrm{d})$, and $s_j = s_j((\bar{x},\theta'),(x_0^*,\theta^*))$, $j\in[0,N]$.
	By application of~\eqref{eq:supply_value} and the initial and terminal conditions on the storage function (see Definition~\ref{def:diss}), we can infer that
	\begin{align*}
		V'(\bar{x})
		\geq&\ V^* +  M(N)\sum_{j=0}^N \alpha_\lambda(\|y'_j-y^*_j\|) \\
		&\ +  \lambda_{N+1}(\bar{x},\theta',(x^*_0,\theta^*)) - \lambda_0(\bar{x},\theta',(x^*_0,\theta^*))\\
		\geq&\ V^* +  M(N)\sum_{j=0}^N \alpha_\lambda(\|y'_j-y^*_j\|) - M(N) E_\lambda,
	\end{align*}
	i.e.,~\eqref{eq:coerc} holds with $\gamma(s) = s$, $\alpha_\mathrm{c} = \alpha_\lambda$, and $C_\mathrm{c}=E_\lambda$. Hence, we can conclude that the value function $V'(\bar{x})$ is coercive in the sense of Definition~\ref{def:coerc}, completing the proof.
\end{pf}

% Print the credit authorship contribution details
\printcredits

\section*{Declaration of generative AI use in the manuscript preparation process}
During the preparation of this work, the authors used ChatGPT (OpenAI) to support language editing, manuscript organization, and exploratory discussions of mathematical arguments, examples, and formulations. All mathematical results, proofs, interpretations, and conclusions were independently assessed and verified by the authors. The authors reviewed and edited all AI-assisted content and take full responsibility for the content of the article.

%% Bibliography

% Biography
\bio{images/schiller}
\textbf{Julian D. Schiller} received his Master's degree in Mechatronics in 2019 and his Ph.D. degree in Electrical Engineering in 2025, both from the Leibniz University Hannover, Germany. He is currently a Postdoctoral Researcher at the Institute of Automatic Control, Leibniz University Hannover, Germany. His research interests are in the area of optimization-based state estimation and control of nonlinear systems, turnpike theory, and their interface with deep learning.
\endbio

\bio{images/mueller}
\textbf{Matthias A. Müller} received a Diploma degree in engineering cybernetics from the University of Stuttgart, Germany, an M.Sc. in electrical and computer engineering from the University of Illinois at Urbana-Champaign (both in 2009), and a Ph.D. in mechanical engineering from the University of Stuttgart in 2014. Since 2019, he is Director of the Institute of Automatic Control and Full Professor at the Leibniz University Hannover, Germany.

His research interests include nonlinear control and estimation, model predictive control, and data- and learning-based control, with application in different fields including biomedical engineering and robotics. He has received various distinctions for his work, including the European Systems \& Control PhD Thesis Award, an ERC Starting Grant from the European Research Council, the IEEE CSS George S. Axelby Outstanding Paper Award, the Brockett-Willems Outstanding Paper Award, and the Journal of Process Control Paper Award. He serves/d as an associate editor for Automatica and as an editor of the International Journal of Robust and Nonlinear Control.
\endbio

\end{document}